\newtheorem{theorem}{Theorem}[section]
\newtheorem{corollary}[theorem]{Corollary}
\newtheorem{lemma}[theorem]{Lemma}
\newtheorem{question}[theorem]{Question}
\newtheorem{definition}{Definition}[section]
\newenvironment{proof}[1][Proof]{\textbf{#1.} }{\ \rule{0.5em}{0.5em}}
\DeclareMathOperator{\diam}{diam}
\DeclareMathOperator{\ext}{bond}
\DeclareMathOperator{\vol}{vol}
\DeclareMathOperator{\linecenter}{lines}
\newcommand{\Hag}[1]{B^+_{\geq #1}}
\newcommand{\Hal}[1]{B^+_{\leq #1}}
\newcommand{\Hbg}[1]{B^-_{\geq #1}}
\newcommand{\Hbl}[1]{B^-_{\leq #1}}
\DeclareMathOperator{\id}{id}
\newcommand{\catname}[1]{\ensuremath{\mathbf{#1}}\xspace}
\newcommand{\catAb}{\catname{Ab}}
\newcommand{\catSet}{\catname{Set}}
\newcommand{\catomega}{\catname{\omega}}
\newcommand{\PP}{\ensuremath{\mathcal{P}\mkern-12mu\mathcal{P}}\xspace}
\newcommand{\naturalto}{%
  \mathrel{\vbox{\offinterlineskip
    \mathsurround=0pt
    \ialign{\hfil##\hfil\cr
      \normalfont\scalebox{1.2}{.}\cr
      $\longrightarrow$\cr}
  }}%
}
\begin{document}

\title{Sandpile monomorphisms and limits}
\author[a]{Moritz Lang\footnote{ To whom correspondence should be addressed. E-mail: moritz.lang@ist.ac.at}}
\author[a]{Mikhail Shkolnikov}
\affil[a]
{Institute of Science and Technology Austria, 
Am Campus 1, 3400 Klosterneuburg, Austria}%

\date{\today}
\maketitle

\begin{abstract}
We introduce a tiling problem between bounded open convex polyforms $\hat{P}\subset\mathbb{R}^2$ with directed and uniquely colored edges. If there exists a tiling of the polyform $\hat{P}_2$ by $\hat{P}_1$, we show that one can construct a monomorphism from the sandpile group $G_{\Gamma_1}=\mathbb{Z}^{\Gamma_1}/\Delta(\mathbb{Z}^{\Gamma_1})$ on the domain (graph) $\Gamma_1=\hat{P}_1\cap\mathbb{Z}^2$ to the respective group on $\Gamma_2=\hat{P}_2\cap\mathbb{Z}^2$. We provide several examples of infinite series of such tilings with polyforms converging to $\mathbb{R}^2$, and thus the first definition of scaling-limits for the sandpile group on the plane. Additional results include an exact sequence relating sandpile configurations to harmonic functions, an alternative formula for the order of the sandpile group based on a basis for the module of integer-valued harmonic functions, and three examples of how to prove the existence of (cyclic) subgroups for infinite families of sandpile groups by constructing appropriate integer-valued harmonic functions.
The main open question concerns if the scaling-limits of the sandpile group for different sequences of polyforms converging to $\mathbb{R}^2$ are isomorphic.
\end{abstract}

\section{Introduction}
\subsection{Background}
Let $\bar{\Gamma}=\Gamma\cup\{s\}$ be the vertices of a finite connected (multi-)graph with sink $s$. Denote by $\partial\Gamma$ the boundary of $\Gamma$ -- the set of all vertices adjacent to the sink. The standard discrete graph Laplacian $\bar{\Delta}_{\bar\Gamma}$ is then defined as the difference between the adjacency matrix of $\bar\Gamma$ and its degree/valency matrix. When we delete, from $\bar{\Delta}_{\bar\Gamma}$, the row and column corresponding to the sink, we obtain the reduced graph Laplacian $\Delta_\Gamma$.
The sandpile group $G_\Gamma$ is then defined as the cokernel of $\Delta_\Gamma$ acting on $\mathbb{Z}^\Gamma$ \cite{Dhar1990,Biggs1999,Raza2014,Alar2017}, i.e.
\begin{align*}
G_\Gamma=\mathbb{Z}^\Gamma/\Delta_\Gamma(\mathbb{Z}^\Gamma).
\end{align*}
Note that the sandpile groups corresponding to different choices of the sink for the same graph $\bar{\Gamma}$ are isomorphic \cite{Cori2000}. Also note that the sandpile group was rediscovered several times and that it is, as a consequence, sometimes referred to as the critical group--based on the work of Biggs \cite{Biggs1999,Biggs1999b}--or as the Jacobian and (sometimes) as the Picard group \cite{Bacher1997,Biggs1997,Baker2007}.

The study of the sandpile group originated in the physical literature, and there mainly focuses on sandpile groups defined on finite connected domains of the standard square lattice $\mathbb{Z}^2$, i.e. on graphs $\bar\Gamma$ obtained from $\mathbb{Z}^2$ by contracting all vertices $\mathbb{Z}^2\setminus(\mathbb{Z}^2\cap P)$ outside of some finite open set $P\subset\mathbb{R}^2$ to the sink.
The group naturally arises in the study of the sandpile model, a cellular automaton introduced by Bak, Tang and Wiesenfeld in 1987 \cite{Bak1987} as the first and archetypical example of a system showing self-organized criticality (SOC), a phenomenom which subsequently became important in several areas of physics, biology, geology and other fields (see \cite{Aschwanden2016} for a recent review). Shortly after the introduction of this cellular automaton, Dhar showed that its recurrent configurations form a group isomorphic to $G_\Gamma$, and laid the foundation for its analysis \cite{Dhar1990,Dhar1995}. 
Due to this isomorphism, both $G_\Gamma$ as well as the group formed by the recurrent configurations of the sandpile model are commonly referred to as the sandpile group. 
This may cause some confusion, since the elements of $G_\Gamma$ rather correspond to the equivalence classes of recurrent configurations.
For readers used to the notation of the literature on the sandpile model, we thus note that the distinction between transient and recurrent configurations does not apply when directly working with $G_\Gamma$. We also note that we denote the group operation by $+:G_\Gamma\times G_\Gamma\rightarrow G_\Gamma$, and not by $(.+.)^\circ$, with $(.)^\circ:\mathbb{Z}_{\geq 0}^\Gamma\rightarrow\{0,\ldots,3\}^\Gamma$ the relaxation operator \cite{Dhar1990}. 

The sandpile group, specifically when defined on domains of $\mathbb{Z}^2$, provides connections between various mathematical fields, including fractal geometry, graph theory and algebraic geometry (see below), tropical geometry \cite{Caracciolo2010,Kalinin2016,Kalinin2017,Kalinin2018b,Kalinin2019}, domino tilings \cite{Florescu2015}, and others. 
Via the so called ``burning algorithm'', Dhar constructed bijections (in the category of sets) between the sandpile group and spanning trees \cite{Dhar1990,Majumdar1992}, and thus showed that the former is a refinement of the latter. 
Creutz was the first to study the recurrent configuration of the sandpile model corresponding to the identity on domains of $\mathbb{Z}^2$, and provided an iterative algorithm for its construction \cite{Creutz1990}. He found that, on many domains, this identity is composed of self-similar fractal patterns \cite{Creutz1990,LeBorgne2002,Holroyd2008,Caracciolo2008,Caracciolo2008b}; since these patterns appear to be remarkably similar on rectangular domains with the same aspect ratio, scaling limits for the sandpile identity have been conjectured \cite{Holroyd2008}. Recently, we have extended these conjectures and suggested that several scaling limits for each recurrent configuration exist, forming piecewise smooth ``fractal movies'' referred to as harmonic sandpile dynamics \cite{Lang2019}.
Finally, based on an analogy of graphs and (discrete) Riemann surfaces established by Baker and Norine (including a Riemann-Roch theorem for graphs) \cite{Baker2007,Baker2009}, several connections between algebraic geometry and the study of sandpiles were established (see \cite{Perkinson2011} and \cite{Corry2018}, p.65ff. and 191ff., for expositions). The most interesting connection, in the context of this article, is provided by (non-constant) harmonic morphisms between graphs, corresponding to holomorphic maps between surfaces, as these morphisms directly induce epimorphisms between the respective sandpile groups \cite{Baker2009} (see also \cite{Reiner2014,Alfaro2012}).

Only for few infinite families of graphs, the structure of the respective sandpile groups has been (partly) determined, including complete graphs \cite{Lorenzini1991}, complete multipartite graphs \cite{Jacobson2003}, cycles (equivalent to domains of $\mathbb{Z}^1$) \cite{Lorenzini1991}, thick cycles \cite{Alar2017} (see also \cite{Dhar1995}), wheels \cite{Biggs1999, Norine2011}, modified wheels \cite{Raza2014}, wired regular trees \cite{Levine2009}, thick trees \cite{Chen2007}, polygon flowers \cite{Chen2019}, nearly complete graphs \cite{Norine2011}, threshold graphs \cite{Norine2011}, M\"obius ladders \cite{Pingge2006,Deryagina2014}, prism graphs/graphs $\mathcal{D}_n$ of the dihedral group \cite{Dartois2003,Deryagina2014}, and $n$-cubes \cite{Bai2003,Alfaro2012}.
While this list is certainly not complete, the decomposition of the sandpile group on domains of $\mathbb{Z}^2$, for which the sandpile model was originally defined \cite{Bak1987}, is--to our knowledge--yet unknown. Numeric calculations of the order \cite{Dhar1990} or decomposition \cite{Dhar1995} of the sandpile group on small enough domains however indicate that the groups are in general ``incompatible'', even when the domains have the same shape, in the sense that no group monomorphisms can exist between them. For example, the order of the sandpile group on a $3\times 3$ square domain is $2^{11}7^2$, while the one on a $5\times 5$ domain is $2^{18}3^55^211^213^2$. 
Recently, we have shown that the sandpile group can be considered as a discretization of a $|\partial\Gamma|$-dimensional torus, to which we refer to as the extended sandpile group \cite{Lang2019}. We have then derived epimorphisms from the extended sandpile group on a given domain to the corresponding group on a subdomain. On the level of the (usual) sandpile group, due to the discretization, this renormalization is however defined in the category of sets and in general only ``approximates'' group homeomorphism for sufficiently large domains \cite{Lang2019}. Under which conditions these ``approximations'' can be lifted to true group homeomorphisms, if possible at all, is yet unknown. 
This lack of known relationships in terms of homeomorphism between the sandpile groups on different domains of $\mathbb{Z}^2$ is in stark contrast to the role of the sandpile model as the archetypical example for self-organized criticality, given that the concept of criticality itself is based on the notion of scaling. Progress in this subject might also provide means by which the existence of the conjectured scaling limits of the sandpile identity \cite{Holroyd2008} and of other recurrent configurations \cite{Lang2019} might be proven.

\subsection{Sandpile monomorphisms (main result)}
In this paper, we analyze the relationships between sandpile groups defined on different domains of the standard square lattice $\mathbb{Z}^2$. Specifically, given two domains $\Gamma_1, \Gamma_2\subset\mathbb{Z}^2$, $\Gamma_1\subseteq \Gamma_2$, our goal is to understand under which conditions group monomorphisms from $G_{\Gamma_1}$ to $G_{\Gamma_2}$ exist. 

\begin{figure}[tb!]
	\centering
	 	\includegraphics[width=0.9\textwidth]{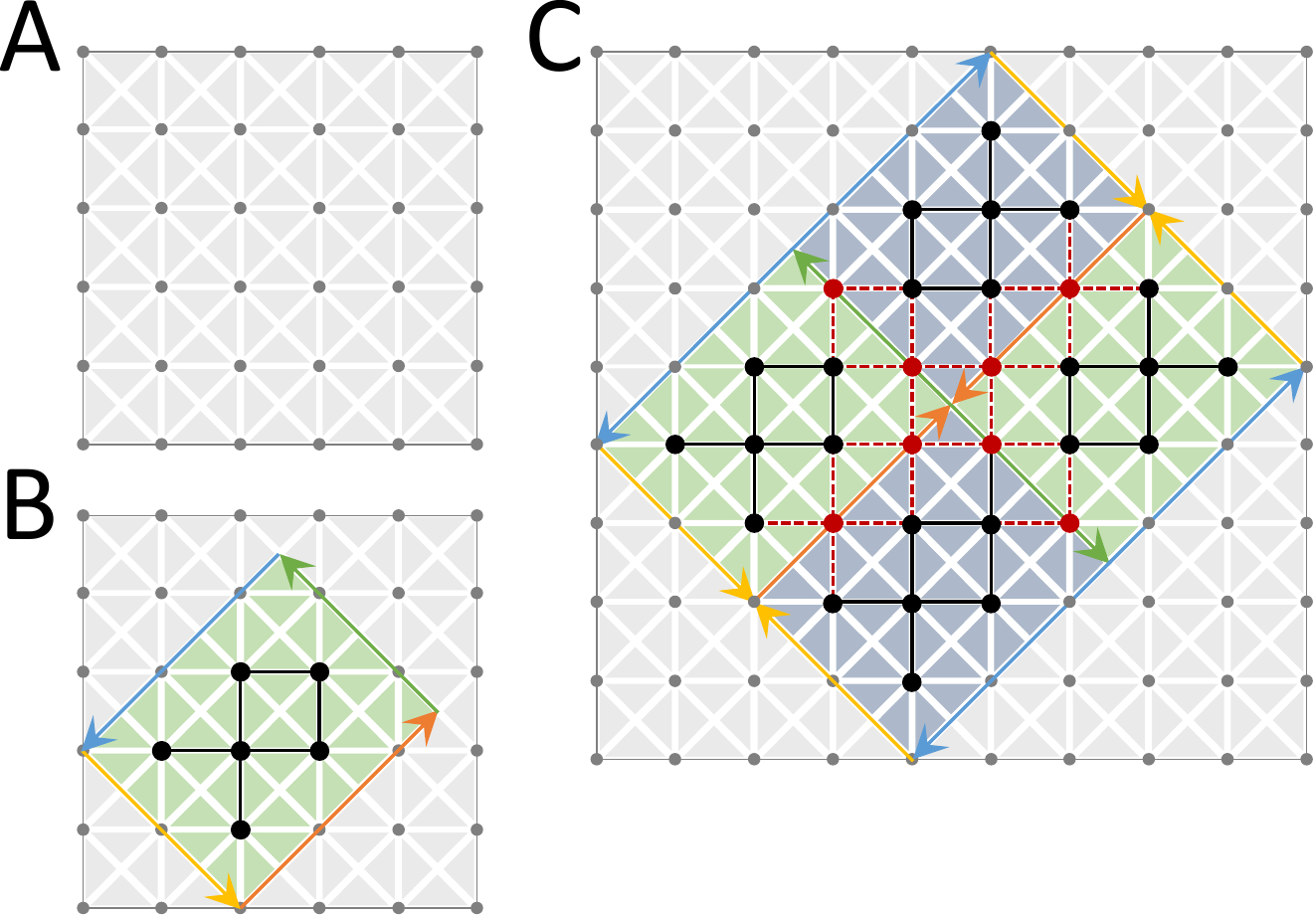}
	\caption{A) Dark-gray points represent the vertices of the standard square lattice $\mathbb{Z}^2$, while the gray isoscele triangles correspond to $M$.
	B) The isocele triangles belonging to the $M$-polyform $P_1$ are highlighted by a green background. The black points and lines represent the vertices and edges of the graph $\Gamma(P_1)=\mathbb{Z}^2\cap P_1$ defined by $P_1$. The sides of the $M$-polyform are directed and colored, exemplifying the definition of $P_1^{DC}$. 
	C) DC-tiling of a $M$-polyform $P_2$ (all colored isoscele triangles) by four copies of the $M$-polyform $P_1^{DC}$ from (B). The background of the tiles are colored green if they can be obtained from $P_1^{DC}$ by only translations and rotations, and blue if (additionally) reflections are required. Note that the graph $\Gamma(P_2)$ consists not only of the vertices and edges corresponding to the four tiles (black points and lines), but also of additional vertices lying on the common edges of pairs of tiles, and the edges connecting these vertices to the rest of the graph (red points and lines).
	}
	\label{fig:initialIllus}
\end{figure}
To state our main result, we first introduce some notation.
Let $M$ be the unique tiling of $\mathbb{R}^2$ by isosceles triangles with base length $1$ and height $\frac{1}{2}$ such that each vertex of $(\mathbb{Z}+0.5)^2$ coincides with the apecies of four triangles (Figure~\ref{fig:initialIllus}A). An $M$-polyform $P\subset M$ then consists of a finite connected subset of triangles in $M$ (Figure~\ref{fig:initialIllus}B). Note that $M$ is \textit{not} the usual triangular tiling of the plane, that the corners and edges of its triangles do not form a lattice, and that $M$-polyforms thus differ from the usual definition of polyiamonds.

By a slight abuse of notation, we interpret each $M$-polyform $P$ to directly correspond to the open subset of $\mathbb{R}^2$ enclosed by its isosceles triangles, i.e. to the interior of $\bigcup_{m\in P}m$.
To each $M$-polyform $P$, we then associate the domain $\Gamma(P)=\mathbb{Z}^2\cap P$. This domain is obtained from the standard square lattice $\mathbb{Z}^2$ as described in the Introduction, i.e. by contracting all vertices $\mathbb{Z}^2\setminus(\mathbb{Z}^2\cap P)$ to the sink (Figure~\ref{fig:initialIllus}B). We interchangeably denote by $G_\Gamma$ and $G_P$ the sandpile groups defined on the domain $\Gamma=\Gamma(P)$.

Denote by $P^{DC}$ the result of assigning directions and colors to the edges of an $M$-polyform $P$ such that each edge has a different color (Figure~\ref{fig:initialIllus}B). Given two $M$-polyforms $P_1$ and $P_2$, we say that $P_1$ DC-tiles $P_2$ if there exists a tiling $T^{P_1\rightarrow P_2}$ of $P_2$ by copies of $P_1^{DC}$ (allowing all transformations which correspond to automorphisms of $M$), such that every common edge of two adjacent tiles in $T^{P_1\rightarrow P_2}$ has the same color and direction (Figure~\ref{fig:initialIllus}C).

We can now state our main result:
\begin{theorem}\label{theorem:main}
Let $P_1$ and $P_2$ be two convex $M$-polyforms, and assume that $P_1$ DC-tiles $P_2$. Then, there exists a group monomorphism $G_{P_1}\rightarrowtail G_{P_2}$ from $G_{P_1}$ to $G_{P_2}$.
\end{theorem}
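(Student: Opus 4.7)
The plan is to realize the monomorphism $\Phi_\ast : G_{P_1} \rightarrowtail G_{P_2}$ as the descent of an explicit $\mathbb{Z}$-linear \emph{intertwiner} $\Phi : \mathbb{Z}^{\Gamma(P_1)} \to \mathbb{Z}^{\Gamma(P_2)}$ between the reduced Laplacians, i.e., satisfying $\Delta_{P_2}\,\Phi = \Phi\,\Delta_{P_1}$. Such an intertwiner automatically sends $\Delta_{P_1}$-boundaries to $\Delta_{P_2}$-boundaries, since $\Phi(\Delta_{P_1}\psi) = \Delta_{P_2}(\Phi\psi)$, and hence descends to a well-defined group homomorphism on sandpile groups.

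The construction of $\Phi$ would proceed tile-by-tile. On a lattice point $w$ in the interior of a tile $t \in T^{P_1 \to P_2}$, set $\Phi(\phi)(w) := \phi(\pi_t(w))$, where $\pi_t$ is the $M$-automorphism identifying $t$ with $P_1$; this covers all tile-interior vertices of $\Gamma(P_2)$. On the ``additional'' lattice points lying on edges shared between tiles (the red points of Figure~\ref{fig:initialIllus}C), the value of $\Phi(\phi)$ is no longer fixed by tile folding and must instead be deduced from the intertwining relation, which at each such vertex produces a finite linear equation in the already-fixed tile-interior values and the new unknowns. The role of the DC-condition is to ensure that this forcing is consistent across all tiles meeting at a given additional vertex, so that $\Phi$ is well-defined; proving that the forced rational values are actually integers is the first main technical point, and is expected to rely on the combinatorics of convex $M$-polyforms together with the explicit tile bijections. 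As a sanity check: in the simplest case where $P_1$ is a $2\times 2$ square with $\Gamma(P_1)=\{(0,0)\}$ and $P_2$ is the $3\times 3$ square tiled by four reflected copies of $P_1$, the forced value at the shared corner is $-2$ while at the shared edge-midpoints it is $0$, producing an integer eigenvector of $\Delta_{P_2}$ for the eigenvalue $-4$.

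For injectivity of $\Phi_\ast$, suppose $\Phi(\phi) = \Delta_{P_2}\psi$ for some $\psi \in \mathbb{Z}^{\Gamma(P_2)}$. The plan is to pair this equation against integer-valued harmonic functions on $\Gamma(P_2)$ and push the pairing down to $\Gamma(P_1)$ via the intertwiner, invoking the exact sequence relating sandpile configurations to the module of integer-valued harmonic functions that the paper establishes elsewhere in the text. Provided every integer-valued harmonic function on $\Gamma(P_1)$ arises as the restriction of such a harmonic function on $\Gamma(P_2)$ --- which one can attempt to show by periodic-tile extension using the DC-matching --- this pairing forces $\phi$ itself to be a $\Delta_{P_1}$-boundary. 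The two anticipated main obstacles are therefore (a) integrality of $\Phi$ at the additional vertices, and (b) surjectivity of the harmonic restriction map; both are expected to be consequences of the DC-matching condition combined with the convexity of $P_1$ and $P_2$, with convexity preventing pathological boundary behavior that would otherwise break integrality or the lifting of harmonic functions.
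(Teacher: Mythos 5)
Your high-level strategy --- an integer intertwiner $\Phi$ with $\Delta_{P_2}\Phi=\Phi\,\Delta_{P_1}$, which then descends to a homomorphism of the cokernels --- is sound in principle and genuinely different from the paper's route, but the two steps you flag as ``anticipated main obstacles'' are precisely the mathematical content of the theorem, and neither is carried out. The paper works on the other side of the isomorphism of Lemma~\ref{lemma:exactSequence}: it pulls back a \emph{harmonic} function $H_A$ to each tile \emph{with a sign} $s(P_i)=\pm1$ recording whether the tile is a reflected copy of $P_1^{DC}$, sets the result to zero on the internal boundaries $\partial^T\Gamma_2$, proves a combinatorial cancellation (Corollary~\ref{corollary:IBs}: around every internal-boundary vertex the signed neighbours in each equivalence class sum to zero, which requires an angle analysis at vertices where three or four tiles meet and uses convexity of both polyforms), and then repairs the remaining integer-valued defect next to the internal boundaries by adding integer solutions $X_i$ of Dirichlet problems outside each tile. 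Your sign-less pullback of \emph{configurations} loses exactly this cancellation mechanism: writing out your intertwining relation at a tile vertex $w$ adjacent to the internal boundary forces $\sum_{b\sim w,\,b\in\partial^T\Gamma_2}\Phi\phi(b)=0$, while at an internal-boundary vertex $b$ it forces $\sum_{u\sim b}\Phi\phi(u)=4\,\Phi\phi(b)+\Phi(\Delta_{P_1}\phi)(b)$, so the entire burden of consistency is shifted onto the undetermined values of $\Phi\phi$ on $\partial^T\Gamma_2$, which must simultaneously be linear in $\phi$, integral, and compatible with every tile meeting a given internal-boundary vertex (including the three- and four-tile corner configurations that occupy most of the paper's proof of Corollary~\ref{corollary:IBs}). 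Your sanity check works because it is the maximally symmetric case with a one-point tile, where the system happens to be consistent and forced; you give no argument that this simultaneously over- and under-determined linear system is solvable in general, and that is where the theorem actually lives.

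The injectivity step has the same character. You need, in effect, that the transpose of $\Phi$ carries the toppling invariants of $\Gamma(P_2)$ \emph{onto} those of $\Gamma(P_1)$ (a surjection $\mathcal{H}_G^{P_2}\to\mathcal{H}_G^{P_1}$ compatible with $\Phi$), which you reduce to ``every integer-valued harmonic function on $\Gamma(P_1)$ is the restriction of one on $\Gamma(P_2)$'' and then leave as something ``one can attempt to show.'' That statement is neither obviously true nor what the paper uses: in the paper's construction injectivity is essentially immediate, because the image $H_B=\xi(H_A)$ restricted to any single tile domain $\Gamma_i$ equals $\pm H_A\circ\psi_i^{-1}$ plus an integer-valued function, so $H_B\in\mathcal{H}_\mathbb{Z}^{\Gamma_B}$ forces $H_A\in\mathcal{H}_\mathbb{Z}^{\Gamma_A}$. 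As it stands your proposal is a plausible programme rather than a proof: both the existence and integrality of $\Phi$ on the internal boundaries and the surjectivity of the harmonic transfer remain open, and the orientation signs together with Corollary~\ref{corollary:IBs} are the ingredients you would need to find substitutes for.
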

In the proof of this theorem, we construct an explicit mapping $\mu(T^{P_1\rightarrow P_2})=(G_{P_1}\rightarrowtail G_{P_2})$
from $DC$-tilings to the corresponding sandpile group monomorphisms. We refer to Section~\ref{proof:main} for the details on the construction of this map, and here only discuss some of its properties. We note that the graph morphisms $\Gamma(P_2)\rightarrow\Gamma(P_1)$ induced by DC-tilings are in general not harmonic at the sink $s_2$ of $\Gamma(P_2)$ (Figure~\ref{fig:initialIllus}B\&C), and that thus Theorem~\ref{theorem:main} is distinct from the theory on harmonic graph morphisms \cite{Baker2009}. 

Trivially, for two $M$-polyforms $P_1$ and $P_2$, there can exist more than one distinct DC-tiling of $P_2$ by $P_1$. For example, let the polyform $P$ describe a square with width $w$ and sides parallel to the standard axes of $\mathbb{R}^2$. Since the dihedral group $D_4$ of a square has order eight, there also exist eight different DC-tiling of $P$ by itself. For $w>2$, $\mu$ maps each of these tilings to a different automorphism of $G^{P}$, which directly correspond to the action of the respective element of $D_4$ on $\Gamma(P)$ (see proof of Theorem~\ref{theorem:main}). For $w=2$, the domain $\Gamma(P)$ however consists of only a single vertex, and all eight tilings are mapped to the trivial automorphism. 
Now, denote by $\hat{P}$ the result of extending a polyform $P$ by one triangle in $M$ adjacent to $P$ such that $\Gamma(\hat{P})=\Gamma(P)$. Then, there exist no DC-tilings of $\hat{P}$ by $P$, or vice versa. However, since $G^P=G^{\hat{P}}$, the set of automorphisms is non-empty. We thus conclude that the mapping $\mu$ is in general neither injective nor surjective.

\subsection{Scaling-limits of the sandpile group}
Let $\PP$ denote the poset of bounded convex $M$-polyforms, with $P_1\subseteq_{DC} P_2$ if there exists a DC-tiling $T^{P_1\rightarrow P_2}$ of the $M$-polyform $P_2$ by the $M$-polyform  $P_1$ such that 
the position and orientation of one tile in $T^{P_1\rightarrow P_2}$ directly corresponds to $P_1^{DC}$, i.e. $P_1^{DC}\in T^{P_1\rightarrow P_2}$. We naturally identify $\PP$ with its corresponding (small) category, with the (faithful) forgetful functor $U:\PP\rightarrow\catSet$ to the category of sets mapping each $M$-polyform to its corresponding open subset of $\mathbb{R}^2$ and $\subseteq_{DC}$ to set inclusions.

Since the position and orientation of one tile uniquely identifies a $DC$-tiling (if it exists), the definition of $\PP$ allows us to associate a $DC$-tiling $\nu(P_1\subseteq_{DC} P_2)\in\{T^{P_1\rightarrow P_2}\}$ to each morphism $P_1\subseteq_{DC} P_2$, i.e. the unique DC-tiling satisfying $P_1^{DC}\in T^{P_1\rightarrow P_2}$. We can then define the functor $F:\PP\rightarrow\catAb$ from $\PP$ to the category $\catAb$ of abelian groups, with $F(P)=G_P$ and $F(P_1\subseteq_{DC} P_2)=\mu(\nu(P_1\subseteq_{DC} P_2))$. To see that $F(\id_P)=\id_{G_P}$ and $F((P_2\subseteq_{DC} P_3)\circ (P_1\subseteq_{DC} P_2))=(G_{P_2}\rightarrowtail G_{P_3})\circ (G_{P_1}\rightarrowtail G_{P_2}$), we refer to the construction of the map $\mu$ in Section~\ref{proof:main}.

\begin{figure}[tb!]
	\centering
	 	\includegraphics[width=0.8\textwidth]{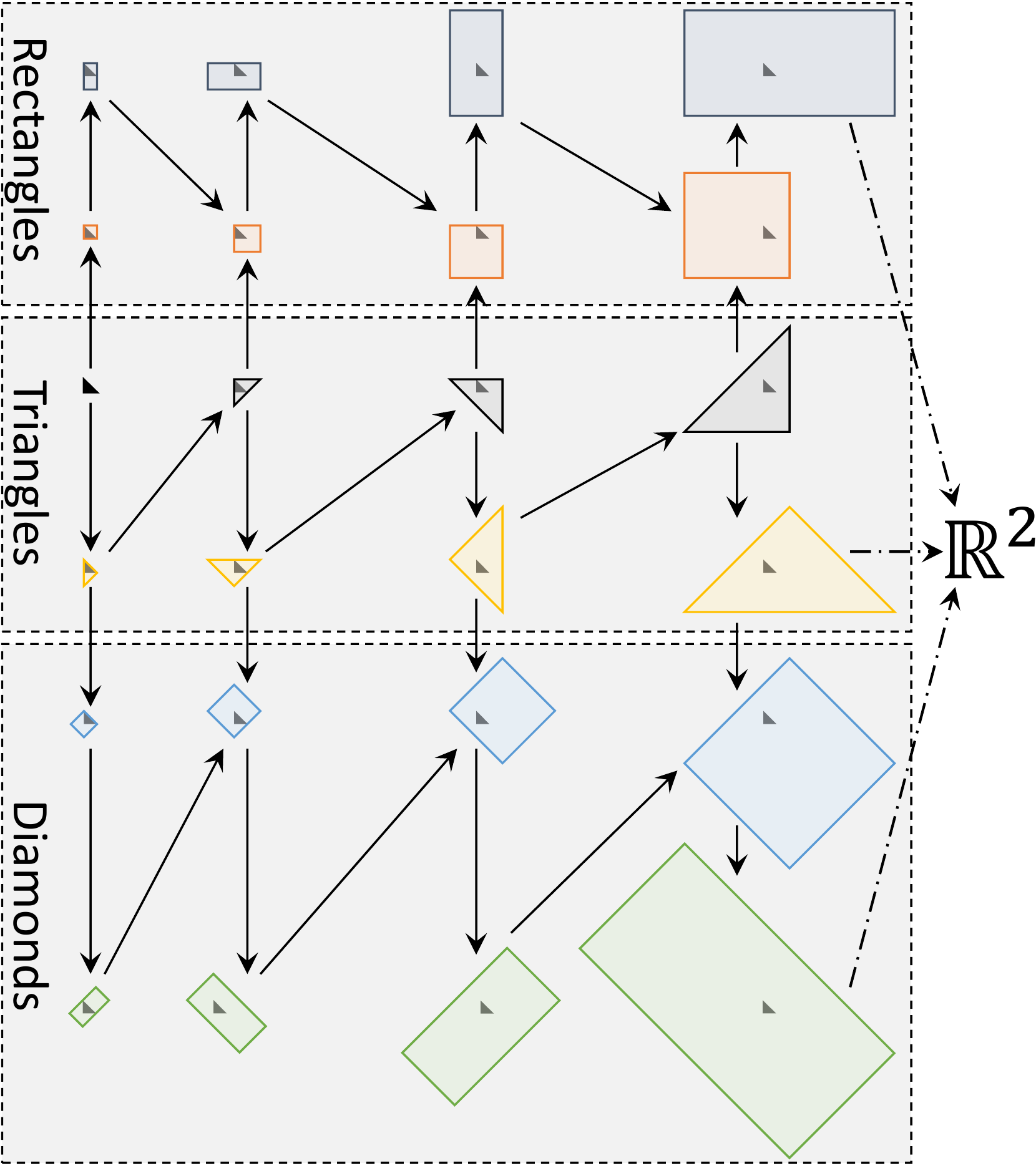}
	\caption{Depiction of a small finite part of the category \PP of $M$-polyforms. Each shape represents a $M$-polyform $P$, while arrows represent morphisms $P_1\subseteq_{DC}P_2$ (identities and composed morphisms omitted). For a better orientation, the position of the initial triangular $M$-polyform (black) is depicted by a gray background in each $M$-polyform. The category \PP is not filtered, since there exist no $DC$-tiling of diamond-shaped $M$-polyforms by rectangular-shaped ones, or vice versa. However, both classes of $M$-polyforms can be reached by triangular-shaped ones. Note that the composition of all non-bounded sequences $S$ with the forgetful functor $U$ in the depicted part of $\PP$ has a direct limit of $\mathbb{R}^2$.
	}
	\label{fig:category}
\end{figure}

Of specific interest are infinite sequences $S=S_0 \subseteq_{DC} S_1 \subseteq_{DC} S_2\ldots$ of $M$-polyforms in $\PP$ (identity and composed morphisms omitted), i.e. functors $S\in\PP^\catomega$ from the usual linear order $\catomega=\{0,1,\ldots\}$ on the ordinal numbers to $\PP$. Trivially, each of these sequences, composed with the forgetful functor $U$, defines a direct limit $\varinjlim U S=\bigcup_i U(S_i)\subseteq\mathbb{R}^2$ (in the category of sets, since \PP does not admit all filtered colimits), which we denote by $\hat{S}_\infty$. Furthermore, each sequence, composed with $F$, also defines a direct limit $\varinjlim F S$, denoted either as $G_{\hat{S}_\infty}^S$ or, equivalently, by $G_{\Gamma(\hat{S}_\infty)}^S$.
We interpret $G_{\Gamma(\hat{S}_\infty)}^S$ as the limit of the sandpile group for $\Gamma(S_i)\rightarrow\Gamma(\hat{S}_\infty)$ (with respect to the sequence $S$). In Figure~\ref{fig:category}, we depict the morphisms between four families of polyforms in $\PP$. The direct limit of each infinite sequence $S\in\PP^\catomega$ which only contains these polyforms and morphisms, with $S_{i+1}\neq S_i$ for all $i\in\catomega$, is given by $\hat{S}^\infty=\mathbb{R}^2$, and thus $\Gamma(\hat{S}^\infty)=\mathbb{Z}^2$. To our knowledge, the respective limits of the sandpile group $G_{\mathbb{Z}^2}^S$ are the first\footnote{This claim of priority might be controversial: there exist several approaches to define sandpile models directly on $\mathbb{Z}^2$, which cope with the occurrence of infinite avalanches in various ways. The (weak) limits of the sandpile measures for some of these models can be associated to/concentrate on certain abelian groups, see e.g. \cite{Maes2004,Athreya2004,Maes2005,Jarai2015} and references therein. At least to us, it is however unclear if and how these groups exactly relate to the categorical notion of (scaling) limits for the sandpile group employed in this article.} definitions of scaling limits for the sandpile group on $\mathbb{Z}^2$.

If a given sequence $S$ of $M$-polyforms is upper bounded, i.e. if there exists an $u\in\catomega$ such that $U S_j=U S_u=\hat{S}_\infty$ for all $j\geq u$, it directly follows that $G_{\hat{S}_\infty}^S\cong G_{S_u}$. Thus, for such upper bounded sequences, the limit of the sandpile group is completely determined (up to isomorphisms) by the upper bound, i.e. $F$ preserves all finite direct limits. In such cases, we can drop the dependency of $G_{\hat{S}_\infty}^S$ on $S$ and simply write $G_{\hat{S}_\infty}$. We may ask if the same also holds for unbounded sequences:
\begin{question}\label{question:limitPreservation}
Let $S^A,S^B\in\PP^\catomega$ be two (possibly unbounded) sequences of $M$-polyforms with common limit $\hat{S}_\infty=\varinjlim U S^A=\varinjlim U S^B$. Is $G_{\hat{S}_\infty}^{S^A}$ isomorphic to $G_{\hat{S}_\infty}^{S^B}$?
\end{question}
Let $\hat{\PP}$ be the category with objects corresponding to all limits $\hat{S}_\infty=\varinjlim U S$ of sequences $S\in\PP^\catomega$ of polyforms, and morphisms $\hat{S}_\infty^A\subseteq \hat{S}_\infty^B$ if there exists a natural transformation $S^a\naturalto S^b$ between two sequences $S^a,S^b\in\PP^\catomega$ with $\hat{S}^A_\infty=\varinjlim U S^a$ and $\hat{S}^B_\infty=\varinjlim U S^b$, i.e. if $S^a_i\subseteq_{DC}S^b_i$ for all $i\in\catomega$. 
We interpret $\PP$ to represent a full subcategory of $\hat{\PP}$, with the object function of the (fully faithful) inclusion functor $I:\PP\rightarrow\hat{\PP}$ given by $I(P)=\varinjlim U\delta P$, where $\delta:\PP\rightarrow\PP^\catomega$ denotes the diagonal functor with $(\delta P)_i=P$ for all $i\in\catomega$. Question~\ref{question:limitPreservation} then asks if there exists a functor $\hat{F}:\hat{P}\rightarrow\catAb$ which preserves all direct limits, and for which $F$ factors as $\hat{F}\circ U$.

In case Question~\ref{question:limitPreservation} can be answered in the affirmative, the dependency of the direct limit of the sandpile group on the sequence could be always dropped. Specifically, this would mean that there exists a unique scaling limit $G_{\mathbb{Z}^2}$ (up to isomorphisms) of the sandpile group on $\mathbb{Z}^2$.
In this case, we would however immediately arrive at the following result:
\begin{corollary}\label{equivalenceLimits}
Assume that $G_{\hat{S}_\infty}^{S^A}\cong G_{\hat{S}_\infty}^{S^B}$ whenever $\hat{S}_\infty=\varinjlim U S^A=\varinjlim U S^B$. Then, the limit of the sandpile group on $\mathbb{Z}^2$ is isomorphic to its limit on the upper-right quadrant of $\mathbb{Z}^2$, i.e. $G_{\mathbb{Z}^2}\cong G_{\mathbb{Z}_{\geq 0}^2}$.
\end{corollary}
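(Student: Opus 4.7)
The plan is to exhibit two sequences $S^A, S^B \in \PP^\catomega$ of $M$-polyforms with $\varinjlim U S^A = \mathbb{R}^2$ and $\varinjlim U S^B = \mathbb{R}^2_{\geq 0}$, and to show that the induced functors $F S^A, F S^B : \catomega \to \catAb$ are naturally isomorphic. The corollary's hypothesis immediately identifies $\varinjlim F S^A \cong G_{\mathbb{Z}^2}$ and $\varinjlim F S^B \cong G_{\mathbb{Z}_{\geq 0}^2}$, so a natural isomorphism of these functors delivers the desired $G_{\mathbb{Z}^2} \cong G_{\mathbb{Z}_{\geq 0}^2}$.

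I would take $S^B_i$ to be the $2^i \times 2^i$ square polyform with lower-left corner at the origin, and the morphism $S^B_i \subseteq_{DC} S^B_{i+1}$ to be the obvious partition of $S^B_{i+1}$ into four copies of $S^B_i$, with distinguished (lower-left) tile $(S^B_i)^{DC}$. This sequence has set-limit $[0,\infty)^2 = \mathbb{R}^2_{\geq 0}$. For $S^A$, each $S^A_i$ is again a $2^i \times 2^i$ square, positioned inductively so that $S^A_i$ occupies alternately the lower-left and the upper-right quadrant of $S^A_{i+1}$. A short induction shows that the corners of $S^A_i$ escape to $\pm\infty$ in both the $x$- and $y$-directions, so $\varinjlim U S^A = \mathbb{R}^2$.

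For each $i$ let $g_i$ be the rigid motion of $\mathbb{R}^2$ (a translation, possibly composed with a $\pi$-rotation about the center of $S^A_i$) that carries $S^A_i$ onto $S^B_i$; since $g_i$ restricts to an automorphism of $\mathbb{Z}^2$, it induces a group isomorphism $\phi_i : G_{S^A_i} \to G_{S^B_i}$. By choosing the rotation component of $g_{i+1}$ appropriately at each step, one can arrange that $g_{i+1}$ carries the chosen quadrant placement of $S^A_i$ inside $S^A_{i+1}$ onto the lower-left placement of $S^B_i$ inside $S^B_{i+1}$.

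The main step is to verify the commutativity
\begin{equation*}
\phi_{i+1} \circ F(S^A_i \subseteq_{DC} S^A_{i+1}) = F(S^B_i \subseteq_{DC} S^B_{i+1}) \circ \phi_i
\end{equation*}
for every $i$. This reduces to the equivariance of the monomorphism construction $\mu$ from Theorem~\ref{theorem:main} under those rigid motions of $\mathbb{R}^2$ that are automorphisms of $\mathbb{Z}^2$: transporting a DC-tiling through such a motion and then applying $\mu$ should yield the same monomorphism as first applying $\mu$ and then transporting through the induced sandpile group isomorphisms. Establishing this equivariance is the principal obstacle and requires a careful inspection of the construction of $\mu$ in Section~\ref{proof:main}, in particular of how the direction-color labels transform under the symmetries of the square. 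Once in hand, $\{\phi_i\}$ is a natural isomorphism $F S^A \Rightarrow F S^B$, so the two direct limits coincide and the corollary follows.
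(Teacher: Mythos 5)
Your overall strategy---exhibit two sequences with different set-theoretic limits ($\mathbb{R}^2$ versus the quarter-plane) whose images under $F$ are naturally isomorphic---is exactly the one the paper uses, but your specific construction leaves a genuine gap at the step you yourself flag as ``the principal obstacle.'' In a $2\times 2$ DC-tiling of a square by a square with four distinctly colored directed edges, the upper-right tile is the $\pi$-rotation of the lower-left tile, not a translate of it (each of the two reflections needed to reach grid position $(1,1)$ flips a parity). Consequently your comparison maps $g_i$ must alternate between translations and point reflections, and you need (i) the equivariance of $\mu$ under non-translational automorphisms of $\mathbb{Z}^2$, and (ii) a further compatibility check, because $g_{i+1}|_{S^A_i}\neq g_i$ in general (e.g.\ $g_1=\mathrm{id}$ while $g_2$ restricts to the $\pi$-rotation of $S^A_1$ about its own center), so naturality does not follow from equivariance alone: one must also verify that the discrepancy $g_i\circ(g_{i+1}|_{S^A_i})^{-1}$ is exactly absorbed by the difference between the two tilings' identifications $\psi$ of the distinguished tile with the base polyform. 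Both claims are true (the harmonic-function construction of $\mu$ in Section~\ref{proof:main} is manifestly equivariant under lattice symmetries, and the bookkeeping does close up), but neither is established in your write-up, and (ii) is not even mentioned.

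The paper sidesteps all of this with a different choice of sequences: take $S^A_0=S^B_0$ a square of side $w_0$, scale both sequences by a factor of $5$, and place $S^A_i$ at the center of $S^A_{i+1}$ but $S^B_i$ at its bottom-left. In the $5\times5$ grid of tiles, the tile at position $(j,k)$ is a pure translate of the $(0,0)$ tile precisely when $j$ and $k$ are both even, and the center tile sits at $(2,2)$; hence the center and corner tiles of one and the same DC-tiling are pure translates of each other, $\nu$ sends both morphisms to (a translate of) the same tiling, and $F$ sends both to the same monomorphism up to canonical translation isomorphisms. Only the trivial translation-invariance of the construction is needed. If you want to keep your factor-$2$ sequences you must actually prove the rotation-equivariance of $\mu$ and carry out the check in (ii); switching to an odd scaling factor removes the obstacle entirely.
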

This corollary, as well as several similar ones relating the limits of the sandpile group on different unbounded domains, arises because the mapping $\nu$ between morphisms $P_1\subseteq_{DC}P_2$ and DC-tilings $T^{P_1\rightarrow P_2}$ is not injective. We can thus construct two sequences $S^A$ and $S^B$ such that there exists a natural isomorphism $F S^A\cong F S^B$, but for which $\hat{S}^A_\infty\neq\hat{S}^B_\infty$. Corollary~\ref{equivalenceLimits} then follows when choosing $S^A_0=S^B_0$ to be square-shaped $M$-polyforms with side length $w_0$, $S^A_{i+1}$ and $S^B_{i+1}$ to have side lengths $w_{i+1}=5w_i$, $S^A_{i+1}$ to be positioned such that $S^A_{i}$ is in its center, and $S^B_{i+1}$ such that $S^B_{i}$ is at its bottom-left.

\subsection{An exact sequence and the order of the sandpile group}
Theorem~\ref{theorem:main} is based on a close relationship between sandpile groups and certain modules of harmonic functions. Since this relationship is of interest itself, we summarize some of its properties in this section. 
We say that a domain $\Gamma\subseteq\mathbb{Z}^2$ is convex if there exists a convex open set $P\subseteq\mathbb{R}^2$ such that $\Gamma=P\cap\mathbb{Z}^2$. Note that, different to before, we do not require $P$ to be an $M$-polyform anymore.
We say that an $R$-valued function $H:\Gamma\rightarrow R$, $R\in\{\mathbb{Z},\mathbb{Q},\mathbb{R}\}$, is harmonic (on $\Gamma$) if $\Delta_\Gamma H(v)=0$ for all vertices $v\in\Gamma_0$ in the interior $\Gamma_0=\Gamma\setminus\partial\Gamma$ of the domain. The $R$-valued harmonic functions on $\Gamma$ form the module $\mathcal{H}^\Gamma_R$.

\begin{lemma}\label{lemma:exactSequence}
For every finite convex domain $\Gamma\subset\mathbb{Z}^2$, $-\Delta_\Gamma:\mathcal{H}_G^\Gamma\cong G^\Gamma$ is an isomorphism from $\mathcal{H}_G^\Gamma=\{H\in\mathcal{H}_\mathbb{Q}^\Gamma|\Delta_\Gamma H|_{\partial\Gamma}\in\mathbb{Z}^{\partial\Gamma}\}/\mathcal{H}_\mathbb{Z}^\Gamma$ to the sandpile group $G_\Gamma$, with $\mathcal{H}_G^\Gamma$ the subgroup of the rational-valued harmonic functions $\mathcal{H}_\mathbb{Q}^\Gamma$ with integer-valued Laplacians, modulo the integer-valued harmonic functions $\mathcal{H}_\mathbb{Z}^\Gamma$. This isomorphism corresponds to the exact sequence
\begin{align*}
\xymatrix{
0 \ar[r] & G_\Gamma \ar[r] & \mathcal{H}_\mathbb{Q}^\Gamma/\mathcal{H}_\mathbb{Z}^\Gamma \ar[r] & (\mathbb{Q}/\mathbb{Z})^{\partial\Gamma} \ar[r] &  0.
}
\end{align*}
\end{lemma}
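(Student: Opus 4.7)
The plan is to establish the isomorphism $-\Delta_\Gamma\colon \mathcal{H}_G^\Gamma \cong G_\Gamma$ by separately verifying well-definedness, injectivity, and surjectivity, and then to read off the exact sequence as an immediate consequence. The background fact used throughout is that the reduced Laplacian $\Delta_\Gamma$ is invertible over $\mathbb{Q}$ (since $\bar\Gamma$ is connected with sink) even though it is not invertible over $\mathbb{Z}$, where its cokernel is precisely $G_\Gamma$.

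Well-definedness and injectivity are essentially linear algebra. If $H \in \mathcal{H}_\mathbb{Q}^\Gamma$ satisfies $\Delta_\Gamma H|_{\partial\Gamma} \in \mathbb{Z}^{\partial\Gamma}$, then $\Delta_\Gamma H$ vanishes on $\Gamma_0$ by harmonicity and is integer on $\partial\Gamma$ by hypothesis, so $-\Delta_\Gamma H \in \mathbb{Z}^\Gamma$; shifting $H$ by $H' \in \mathcal{H}_\mathbb{Z}^\Gamma \subseteq \mathbb{Z}^\Gamma$ only adds an element of $\Delta_\Gamma(\mathbb{Z}^\Gamma)$, which is trivial in $G_\Gamma$. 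For injectivity, suppose $-\Delta_\Gamma H = \Delta_\Gamma v$ for some $v \in \mathbb{Z}^\Gamma$; invertibility of $\Delta_\Gamma$ over $\mathbb{Q}$ forces $H = -v \in \mathbb{Z}^\Gamma$, and harmonicity then places $H$ in $\mathcal{H}_\mathbb{Z}^\Gamma$, i.e.\ $[H]=0$ in $\mathcal{H}_G^\Gamma$.

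For surjectivity I would apply the snake lemma to the short exact sequence $0 \to \mathbb{Z}^\Gamma \to \mathbb{Q}^\Gamma \to (\mathbb{Q}/\mathbb{Z})^\Gamma \to 0$ endowed with the $\Delta_\Gamma$-endomorphism. Invertibility of $\Delta_\Gamma$ on $\mathbb{Q}^\Gamma$ makes the snake lemma identify $G_\Gamma$ with $\{u\in\mathbb{Q}^\Gamma : \Delta_\Gamma u \in \mathbb{Z}^\Gamma\}/\mathbb{Z}^\Gamma$ via $[f]\mapsto[-\Delta_\Gamma^{-1}f]$. Surjectivity of $-\Delta_\Gamma$ on $\mathcal{H}_G^\Gamma$ then reduces to showing that any such rational $u$ with integer Laplacian can be shifted, modulo $\mathbb{Z}^\Gamma$, to a harmonic representative. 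Writing $g:=\Delta_\Gamma u|_{\Gamma_0}\in\mathbb{Z}^{\Gamma_0}$, this is the integer-lattice claim that the map $\mathbb{Z}^\Gamma\to\mathbb{Z}^{\Gamma_0}$ sending $v\mapsto\Delta_\Gamma v|_{\Gamma_0}$ is surjective: choosing $v$ with $\Delta_\Gamma v|_{\Gamma_0}=g$ makes $H:=u-v$ harmonic and realises $[f]$ as $-\Delta_\Gamma[H]$.

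This integer-lattice surjectivity is the main obstacle and where convexity enters. I would attack it by boundary-to-interior induction. Any boundary vertex $u\in\partial\Gamma$ whose only interior neighbor is some $v$ contributes the column $-e_v$ to the matrix, placing $e_v$ directly in the image, and convexity of $\Gamma$ guarantees a plentiful supply of such ``hook'' vertices along each straight segment of the boundary polygon. Interior vertices deeper in $\Gamma_0$ are reached via the row identity $\Delta_\Gamma e_v|_{\Gamma_0}=4e_v-\sum_{w\sim v,\,w\in\Gamma_0}e_w$: once all interior neighbors of $v$ have been hit, this identity exhibits $4e_v$ in the image, and the diagonal factor $4$ is then cancelled by combining in additional boundary-hook columns, whose orderly layout along the convex boundary is what makes the cancellation succeed over $\mathbb{Z}$. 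With this lattice claim in hand, the exact sequence falls out: the right-hand map $\mathcal{H}_\mathbb{Q}^\Gamma/\mathcal{H}_\mathbb{Z}^\Gamma\to(\mathbb{Q}/\mathbb{Z})^{\partial\Gamma}$, $[H]\mapsto[\Delta_\Gamma H|_{\partial\Gamma}]$, has kernel $\mathcal{H}_G^\Gamma$ by definition (hence $G_\Gamma$ by the established isomorphism), and is surjective because the $\mathbb{Q}$-linear boundary-Laplacian $\mathcal{H}_\mathbb{Q}^\Gamma\to\mathbb{Q}^{\partial\Gamma}$ is a bijection (injective by invertibility of $\Delta_\Gamma$, with both sides of $\mathbb{Q}$-dimension $|\partial\Gamma|$ via the Dirichlet problem).
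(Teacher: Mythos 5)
Your treatment of well-definedness, injectivity, and the passage from the isomorphism to the exact sequence is correct and close in spirit to the paper's. Where you genuinely diverge is surjectivity: the paper invokes Creutz's observation that every class in $G_\Gamma$ has a representative $X\in\mathbb{Z}^\Gamma$ supported on $\partial\Gamma$ (citing Creutz's algorithm rather than proving it) and then takes $H_X=-\Delta_\Gamma^{-1}X$, which is automatically harmonic precisely because $X$ vanishes on $\Gamma_0$. Your snake-lemma identification of $G_\Gamma$ with $\{u\in\mathbb{Q}^\Gamma:\Delta_\Gamma u\in\mathbb{Z}^\Gamma\}/\mathbb{Z}^\Gamma$ reduces surjectivity to the surjectivity of $v\mapsto\Delta_\Gamma v|_{\Gamma_0}$ on $\mathbb{Z}^\Gamma$, and this lattice claim is \emph{exactly equivalent} to Creutz's observation (finding $v$ with $\Delta_\Gamma v|_{\Gamma_0}=-Y|_{\Gamma_0}$ is the same as replacing $Y$ by a boundary-supported representative of its class). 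So your route trades an external citation for an internal lemma, which is a gain in self-containedness --- provided you actually prove the lemma.

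That is where the gap is. Your induction for the lattice claim does not work as described: the step in which the column $\Delta_\Gamma e_v|_{\Gamma_0}=-4e_v+\sum_{w\sim v,\,w\in\Gamma_0}e_w$ is used requires \emph{all} interior neighbours of $v$ to have been reached first, which no linear ordering of $\Gamma_0$ can arrange (the neighbour below $v$ would have to precede $v$ and vice versa), and the subsequent assertion that ``the diagonal factor $4$ is then cancelled by combining in additional boundary-hook columns, whose orderly layout along the convex boundary is what makes the cancellation succeed over $\mathbb{Z}$'' is precisely the statement that needs proof, not an argument for it. The claim is true, and there is a clean argument that avoids the factor $4$ altogether: order $\Gamma_0$ by decreasing $y$-coordinate and, for each interior $v=(x,y)$, use the column indexed by the vertex $w=(x,y+1)$ directly above $v$ (which lies in $\Gamma$ because $v$ is interior). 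Every row in which that column is nonzero, other than the row $v$ where the entry is $+1$, is indexed by a vertex of $y$-coordinate at least $y+1$, so the column equals $e_v$ plus an integer combination of $e_u$ with $u$ strictly earlier in the ordering; strong induction then places every $e_v$ in the image. (Note this uses only that interior vertices have all four lattice neighbours in $\Gamma$, not convexity.) With that substitution your proof is complete.
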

We derive an explicit construction for this isomorphism in the proof of Lemma~\ref{lemma:exactSequence}.

Denote by $\mathcal{B}^\Gamma_R=\{B_i\}_{i=1,\ldots,|\partial\Gamma|}$ a basis for the module $\mathcal{H}^\Gamma_R$ of $R$-valued harmonic functions on a finite convex domain $\Gamma\subset\mathbb{Z}^2$ (in Section~\ref{section:basis}, we present an algorithm for the construction of $\mathcal{B}^\Gamma_\mathbb{Z}$, and thus also for $\mathcal{B}^\Gamma_\mathbb{Q}$ and $\mathcal{B}^\Gamma_\mathbb{R}$).
By definition, the Laplacian $\Delta_\Gamma H$ of every harmonic function $H\in\mathcal{H}^\Gamma_R$, and thus also of every basis function in $\mathcal{B}^\Gamma_R$, only has support at the boundary $\partial\Gamma$ of the domain. The Laplacian of every basis function in $\mathcal{B}^\Gamma_R$ can thus be restricted to $\partial\Gamma$ without information loss, and we refer to
$\Delta\mathcal{B}^\Gamma_R=(\Delta_\Gamma B_1|_{\partial\Gamma},\ldots,\Delta_\Gamma B_{|\partial\Gamma|}|_{\partial\Gamma})\in R^{|\partial\Gamma|\times |\partial\Gamma|}$ 
as the potential matrix of $\Gamma$ (with respect to $\mathcal{B}^\Gamma_R$).
\begin{lemma}\label{lemma:order}
Let $\Gamma\subset\mathbb{Z}^2$ be a finite convex domain, and $\mathcal{B}^\Gamma_\mathbb{Z}$ be a basis for the module of integer-valued harmonic functions $\mathcal{H}^\Gamma_\mathbb{Z}$ on $\Gamma$. Then, the order of the sandpile group $G_\Gamma$ is given by
\begin{align*}
|G_\Gamma|=|\det(\Delta\mathcal{B}^\Gamma_\mathbb{Z})|.
\end{align*}
\end{lemma}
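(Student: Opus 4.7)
The plan is to reduce the counting of $|G_\Gamma|$ to counting cosets of a lattice, using Lemma \ref{lemma:exactSequence} as the starting point. By that lemma, $G_\Gamma \cong \mathcal{H}_G^\Gamma = \{H\in\mathcal{H}_\mathbb{Q}^\Gamma : \Delta_\Gamma H|_{\partial\Gamma}\in\mathbb{Z}^{\partial\Gamma}\}/\mathcal{H}_\mathbb{Z}^\Gamma$, so it suffices to compute the index of $\mathcal{H}_\mathbb{Z}^\Gamma$ in this rational sub-module.

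First I would coordinatize. The Dirichlet problem on the finite domain $\Gamma$ has a unique rational solution for any rational boundary data, so $\dim_\mathbb{Q}\mathcal{H}_\mathbb{Q}^\Gamma = |\partial\Gamma|$. Since $\mathcal{B}_\mathbb{Z}^\Gamma=(B_1,\dots,B_{|\partial\Gamma|})$ is by assumption a $\mathbb{Z}$-basis of $\mathcal{H}_\mathbb{Z}^\Gamma$, its $|\partial\Gamma|$ elements are $\mathbb{Z}$-linearly independent as functions on $\Gamma$, hence also $\mathbb{Q}$-linearly independent, so they form a $\mathbb{Q}$-basis of $\mathcal{H}_\mathbb{Q}^\Gamma$. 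Writing $H=\sum_i c_i B_i$ therefore identifies $\mathcal{H}_\mathbb{Q}^\Gamma \cong \mathbb{Q}^{|\partial\Gamma|}$ and $\mathcal{H}_\mathbb{Z}^\Gamma \cong \mathbb{Z}^{|\partial\Gamma|}$, and under this identification the map $H\mapsto\Delta_\Gamma H|_{\partial\Gamma}$ becomes multiplication by the integer matrix $A:=\Delta\mathcal{B}^\Gamma_\mathbb{Z}$. Consequently,
\begin{equation*}
G_\Gamma \;\cong\; \{c\in\mathbb{Q}^{|\partial\Gamma|} : Ac\in\mathbb{Z}^{|\partial\Gamma|}\}\big/\mathbb{Z}^{|\partial\Gamma|}.
\end{equation*}

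The key intermediate step is to show that $A$ is invertible over $\mathbb{Q}$; this is the main obstacle, since everything else is formal linear algebra. Suppose $\sum c_i\,\Delta_\Gamma B_i|_{\partial\Gamma}=0$ and set $H=\sum c_i B_i\in\mathcal{H}_\mathbb{Q}^\Gamma$. Harmonicity on the interior combined with the assumed vanishing on $\partial\Gamma$ gives $\Delta_\Gamma H\equiv 0$ on all of $\Gamma$. Extending $H$ to the full graph $\bar\Gamma$ by $H(s)=0$ yields a function harmonic with respect to the full Laplacian $\bar\Delta_{\bar\Gamma}$, which by the maximum principle on connected graphs must be identically zero. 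By $\mathbb{Q}$-linear independence of the $B_i$, all $c_i$ vanish, so $A$ has trivial kernel and is invertible.

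With $A$ invertible, the set $\{c\in\mathbb{Q}^{|\partial\Gamma|}:Ac\in\mathbb{Z}^{|\partial\Gamma|}\}$ equals $A^{-1}\mathbb{Z}^{|\partial\Gamma|}$, and multiplication by $A$ is an isomorphism of abelian groups
\begin{equation*}
A^{-1}\mathbb{Z}^{|\partial\Gamma|}\big/\mathbb{Z}^{|\partial\Gamma|} \;\xrightarrow{\;A\;}\; \mathbb{Z}^{|\partial\Gamma|}\big/A\mathbb{Z}^{|\partial\Gamma|}.
\end{equation*}
The right-hand quotient is the cokernel of an invertible integer matrix, whose order equals $|\det A|$ by Smith normal form. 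Combining, $|G_\Gamma|=|\det(\Delta\mathcal{B}^\Gamma_\mathbb{Z})|$, as claimed. A sanity check: the answer is basis-independent because any change of $\mathbb{Z}$-basis for $\mathcal{H}_\mathbb{Z}^\Gamma$ multiplies $A$ on the right by a $\mathrm{GL}_{|\partial\Gamma|}(\mathbb{Z})$ matrix, leaving $|\det A|$ invariant.
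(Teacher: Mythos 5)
Your proof is correct, but it takes a genuinely different route from the paper. The paper argues geometrically via the extended sandpile group: it considers the discretization map $f=\lfloor.\rfloor\circ(-\Delta_\Gamma)\circ\phi_\Gamma:(\mathbb{R}/\mathbb{Z})^{\partial\Gamma}\rightarrow G_\Gamma$, shows by translation invariance that every fiber has volume $\frac{1}{|G_\Gamma|}$, and then computes the volume of the fiber over the identity directly as that of the parallelotope spanned by the vectors $g_i$ with $(\Delta\mathcal{B}^\Gamma_\mathbb{Z})g_i=e_i$, namely $|\det(\Delta\mathcal{B}^\Gamma_\mathbb{Z})^{-1}|$. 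You instead coordinatize $\mathcal{H}_G^\Gamma$ with respect to the basis and compute the lattice index $[A^{-1}\mathbb{Z}^{|\partial\Gamma|}:\mathbb{Z}^{|\partial\Gamma|}]=|\mathbb{Z}^{|\partial\Gamma|}/A\mathbb{Z}^{|\partial\Gamma|}|=|\det A|$ via Smith normal form. These are two faces of the same covolume computation, but your version is more elementary and self-contained: it avoids the extended sandpile group, the connectedness of the fibers of $f$, and the equal-volume argument, and it supplies an explicit proof (via the maximum principle, or equivalently the invertibility of the reduced Laplacian) that the potential matrix $A$ is nonsingular --- a fact the paper uses only implicitly when solving $Ag_i=e_i$. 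What the paper's route buys in exchange is the geometric picture of $G_\Gamma$ as a discretization of the torus $(\mathbb{R}/\mathbb{Z})^{\partial\Gamma}$, which it exploits elsewhere. The one point you lean on lightly is that a $\mathbb{Z}$-basis of $\mathcal{H}_\mathbb{Z}^\Gamma$ has exactly $|\partial\Gamma|$ elements (equivalently, that $\mathcal{H}_\mathbb{Z}^\Gamma$ spans $\mathcal{H}_\mathbb{Q}^\Gamma$ over $\mathbb{Q}$); this is standard, since $\mathcal{H}_\mathbb{Q}^\Gamma$ is a rational subspace of $\mathbb{Q}^\Gamma$ and its intersection with $\mathbb{Z}^\Gamma$ is a full-rank lattice, and the paper builds it into its notation, so it is not a gap.
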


\subsection{Integer-valued harmonic functions and cyclic subgroups}
In this section, we present three examples of constructions which directly link integer-valued harmonic functions to cyclic subgroups of the sandpile group. Such constructions might help to answer Question~\ref{question:limitPreservation} (in the negative), since they represent structural restrictions on an (eventually existing) unique scaling-limit $G_{\mathbb{Z}^2}$ of the sandpile group.

\begin{figure}[bt]
	\centering
	 	\includegraphics[width=0.75\textwidth]{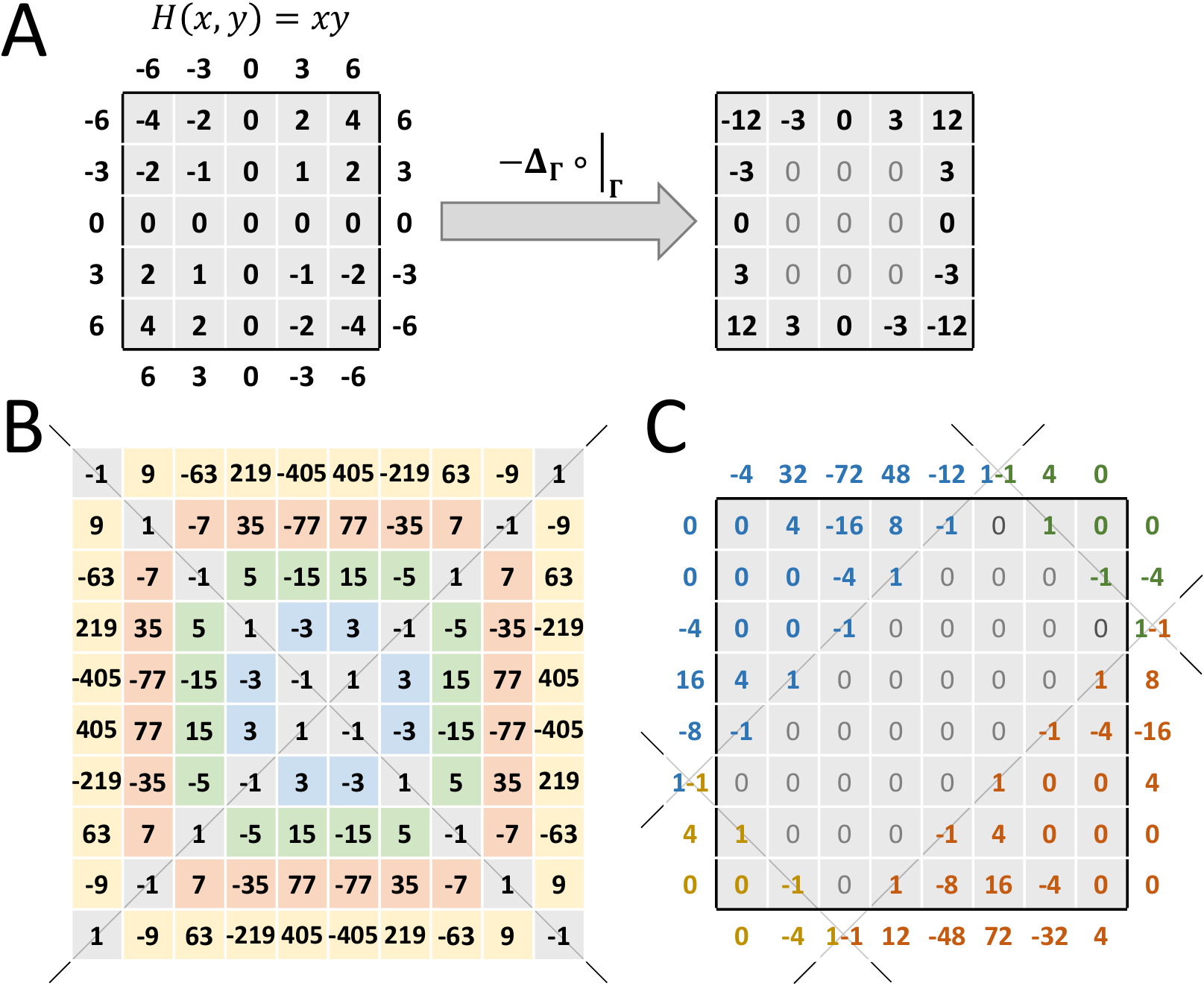}
	\caption{Integer-valued harmonic functions used to prove the existence of cyclic subgroups.
	A) The harmonic function $H=xy$ is coprime on $N\times N$ domains, $N\in2\mathbb{N}+1$, (here: $N=5$) while its Laplacian is divisible by $\frac{N+1}{2}$ (here: by $3$).
	B) All values of the harmonic function $H^{\pi}$ are divisible by three on vertices with a blue background, by five on vertices with a green background, and by seven on vertices with a red background, respectively. This sequence however ends at the vertices with a yellow background, since nine is not prime.
	C) The harmonic function $H^\diamond_i$ corresponds to the sum of the four harmonic basis functions depicted in blue, green, red, and yellow, such that the values of $H^\diamond_i$ on $\partial(\mathbb{Z}^2\setminus\Gamma)$ are all divisible by four.
	}
	\label{fig:primeHarmonic}
\end{figure}

Let $H:\mathbb{Z}^2\rightarrow\mathbb{Z}$, $\Delta_{\mathbb{Z}^2}H=0$, be an integer-valued harmonic function on $\mathbb{Z}^2$. Assume that the values of the restriction of $H$ to some finite convex domain $\Gamma\subset\mathbb{Z}^2$ are coprime, and that the values of $H$ on the boundary $\partial(\mathbb{Z}^2\setminus\Gamma)$ of the complement of the domain are all divisible by some integer $n\geq 2$ (Figure~\ref{fig:primeHarmonic}A). Recall that 
\begin{align*}
\Delta_\Gamma H|_\Gamma(v)=-\sum_{\stackrel{w\in\partial(\mathbb{Z}^2\setminus\Gamma)}{w\sim v}}H(w)
\end{align*}
for all $v\in\Gamma$, and that, thus, also $\Delta_\Gamma H|_\Gamma$ is divisible by $n$ (Figure~\ref{fig:primeHarmonic}A). From the isomorphism in Theorem~\ref{lemma:exactSequence}, it then follows that
$
S_\Gamma^H=\{0,C,\ldots(n-1)C\}\subseteq G_\Gamma
$
forms a cyclic subgroup of the sandpile group $G_\Gamma$ with generator $C=[-\frac{1}{n}\Delta_\Gamma H|_\Gamma]$ and order $|S_\Gamma^H|=n$. 

For a given integer-valued harmonic function $\hat{H}\in\mathcal{H}_{\mathbb{Z}}^\Gamma$, we recently introduced the harmonic sandpile dynamics $D_{\hat{H}}:\mathbb{R}/\mathbb{Z}\rightarrow G_\Gamma$, 
$D_{\hat{H}}(t) = [\lfloor t\Delta_\Gamma\hat{H}\rfloor]$, 
with $\lfloor.\rfloor$ the element-wise floor function \cite{Lang2019}.
For $\hat{H}=H|_\Gamma$, the discussion above implies that the elements of $S_\Gamma^H$ appear exactly at times $t\equiv0, \frac{1}{n},\ldots,\frac{n-1}{n} (\operatorname{mod} 1)$ in the harmonic sandpile dynamics $D_{\hat{H}}$, in the sense that $[\lfloor t\Delta_\Gamma H|_\Gamma\rfloor]=[t\Delta_\Gamma H|_\Gamma]\in S_\Gamma^H$ at these times.

As a first example of how such constructions can impose restrictions on the limits of the sandpile group, consider the family $\{\Gamma_N\}_{N\in2\mathbb{N}+1}$ consisting of all $N\times N$ square domains $\Gamma_N\subset\mathbb{Z}^2$ with odd domain sizes $N$. Assume that each domain $\Gamma_N$ is defined such that its center lies at the origin $(x,y)=(0,0)$ of $\mathbb{Z}^2$. It is then easy to see that the values of the harmonic function $H=x y$ on $\partial(\mathbb{Z}^2\setminus\Gamma_N)$ are divisible by $\frac{N+1}{2}$ (Figure~\ref{fig:primeHarmonic}A), which directly proves the following lemma:
\begin{lemma}\label{lemma:divFirstExample}
For every $N\in 2\mathbb{N}+1$, the sandpile group $G_{\Gamma_N}$ on an $N\times N$ square domain $\Gamma_N\subset\mathbb{Z}^2$ has a cyclic subgroup $\mathbb{Z}/\frac{N+1}{2}\mathbb{Z}\subseteq G_{\Gamma_N}$ of order $\frac{N+1}{2}$.
\end{lemma}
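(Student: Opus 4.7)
The plan is to invoke the general harmonic-function construction given in the paragraph immediately preceding the lemma, applied to the integer-valued function $H(x,y) = xy$ on $\mathbb{Z}^2$ and the divisor $n = \frac{N+1}{2}$. That recipe is guaranteed to produce a cyclic subgroup of $G_{\Gamma_N}$ of order exactly $n$ once three hypotheses are verified: $H$ is harmonic on $\mathbb{Z}^2$, the values of $H|_{\Gamma_N}$ are coprime, and the values of $H$ on $\partial(\mathbb{Z}^2\setminus\Gamma_N)$ are all divisible by $n$.

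First, harmonicity of $H = xy$ is a direct calculation: the four lattice neighbors of $(x,y)$ contribute $(x{+}1)y + (x{-}1)y + x(y{+}1) + x(y{-}1) - 4xy = 0$, so $\Delta_{\mathbb{Z}^2} H \equiv 0$. Second, since $\Gamma_N$ is centered at the origin and $N \geq 3$, the lattice point $(1,1)$ lies in $\Gamma_N$, and $H(1,1) = 1$, so the gcd of the values of $H|_{\Gamma_N}$ equals $1$. Third, the set $\partial(\mathbb{Z}^2\setminus\Gamma_N)$ consists of the vertices $(\pm\frac{N+1}{2}, y)$ with $|y| \leq \frac{N-1}{2}$ together with $(x, \pm\frac{N+1}{2})$ with $|x| \leq \frac{N-1}{2}$; on every such vertex one coordinate equals $\pm\frac{N+1}{2}$, so $H = xy$ is divisible by $n$.

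With these hypotheses in place, the construction produces the element $C = [-\frac{1}{n}\Delta_{\Gamma_N} H|_{\Gamma_N}] \in G_{\Gamma_N}$; here $-\frac{1}{n}\Delta_{\Gamma_N} H|_{\Gamma_N}$ is integer-valued because the only nonzero contributions come from boundary vertices $v \in \partial\Gamma_N$, at which the reduced Laplacian reduces by the formula stated in the excerpt to $-\sum_{w\sim v,\, w\in\partial(\mathbb{Z}^2\setminus\Gamma_N)} H(w)$, a sum of multiples of $n$. The exactness of the order follows via Lemma~\ref{lemma:exactSequence}: under the isomorphism $-\Delta_{\Gamma_N}:\mathcal{H}_G^{\Gamma_N} \cong G_{\Gamma_N}$, the element $C$ corresponds to the class of $\frac{1}{n} H|_{\Gamma_N}$, and $kC = 0$ forces $\frac{k}{n} H|_{\Gamma_N}$ to lie in $\mathcal{H}_{\mathbb{Z}}^{\Gamma_N}$, which by coprimality of the values of $H|_{\Gamma_N}$ implies $n \mid k$. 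There is essentially no obstacle in this lemma: all of the structural work is packaged into Lemma~\ref{lemma:exactSequence} and the general recipe preceding the statement. The only point that merits any attention is the coprimality check, which is precisely what prevents the order of $C$ from dropping to a proper divisor of $n$.
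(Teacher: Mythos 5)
Your proposal is correct and follows exactly the route the paper takes: the paper proves this lemma in the Introduction by observing that $H=xy$ is harmonic, that its values on $\partial(\mathbb{Z}^2\setminus\Gamma_N)$ are divisible by $\frac{N+1}{2}$, and then invoking the general cyclic-subgroup construction from the immediately preceding paragraph. Your write-up merely makes explicit the hypothesis checks (harmonicity, coprimality via $H(1,1)=1$, boundary divisibility) and the role of coprimality in pinning down the exact order, all of which the paper leaves implicit.
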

For every $n\geq 1$, we can construct a sequence of domains $S^n\in\PP^\catomega$ with $S^n_\infty=\varinjlim U S^n=\mathbb{R}^2$, which starts at an $M$-polyform $S^n_0$ with $\Gamma(S^n_0)$ corresponding to an $N\times N$ square domain with $N=2n-1$ (Figure~\ref{fig:category}). If Question~\ref{question:limitPreservation} can be answered in the affirmative, this would directly imply that the scaling limit $G_{\mathbb{Z}^2}$ of the sandpile group on the standard square lattice $\mathbb{Z}^2$ would contain cyclic subgroups of every order.

For $N\times N$ square domains with even domain sizes $N\in 2\mathbb{N}$, somewhat similar results can be obtained when considering the integer-valued harmonic function depicted in Figure~\ref{fig:primeHarmonic}B. This harmonic function takes values on $\partial(\mathbb{Z}^2\setminus\Gamma_N)$ which are divisible by three ($N=2$), five ($N=4$) and seven ($N=6$), respectively. This pattern however breaks down at $N=8$, since $N+1=9$ is not prime.
\begin{lemma}\label{lemma:divPrime}
Let $\Gamma_N\subset\mathbb{Z}^2$ be an $N\times N$ square domain. Then, if $N+1$ is prime, the sandpile group $G_{\Gamma_N}$ possesses a cyclic subgroup $\mathbb{Z}/(N+1)\mathbb{Z}\subseteq G_{\Gamma_N}$ of order $N+1$.
\end{lemma}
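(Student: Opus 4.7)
The plan is to apply the general cyclic-subgroup construction given in the paragraphs just preceding the lemma to a simple explicit integer-valued harmonic function on $\mathbb{Z}^2$. Since translating the domain inside $\mathbb{Z}^2$ leaves the sandpile group isomorphic, I may assume $\Gamma_N = \{1, 2, \ldots, N\}^2$, and then take $H(x, y) = xy$.

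The three hypotheses of the recipe are immediate to verify. Discrete harmonicity on $\mathbb{Z}^2$ follows from $(x+1)y + (x-1)y + x(y+1) + x(y-1) - 4xy = 0$. Coprimality of the values $\{H(v)\}_{v\in\Gamma_N}$ holds because $H(1,1) = 1$. For the external boundary $\partial(\mathbb{Z}^2 \setminus \Gamma_N)$, the relevant lattice points lie on the four lines $x = 0$, $x = N+1$, $y = 0$, $y = N+1$; on the first and third $H$ vanishes, on the second $H = (N+1)y$, and on the fourth $H = x(N+1)$, so every boundary value of $H$ is divisible by $N+1$. The recipe then produces $C = [-\tfrac{1}{N+1}\,\Delta_{\Gamma_N}(H|_{\Gamma_N})] \in G_{\Gamma_N}$, a generator of the desired cyclic subgroup $\mathbb{Z}/(N+1)\mathbb{Z} \hookrightarrow G_{\Gamma_N}$.

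There is no serious technical obstacle in this route. I should remark, however, that the argument as written does not actually use the hypothesis that $N+1$ is prime---it produces a cyclic subgroup of order $N+1$ in $G_{\Gamma_N}$ for every positive integer $N$. Primality becomes essential only in the more symmetric alternative approach suggested by Figure~\ref{fig:primeHarmonic}B, in which a single fixed harmonic function $H^\pi$ is reused across several nested square domains $\Gamma_N$ at once, and where the boundary-divisibility pattern persists precisely while $N+1$ stays prime and collapses at the first composite value ($N+1 = 9$). Replacing that uniform-in-$N$ function by a family of translated, $N$-dependent copies of $xy$ removes the primality constraint without altering the construction or the conclusion.
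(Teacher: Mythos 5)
Your proof is correct, and it takes a genuinely different---and in fact more elementary and more general---route than the paper. The paper proves the lemma by constructing the harmonic function $H^\pi$ of Figure~\ref{fig:primeHarmonic}B as an infinite alternating sum of the basis functions $B^{\pm}_{\gtrless i}$, establishing the identity $(x-y)H^+(x,y)=-(x+y)H^+(x,-y)$ by a two-level induction (including an admittedly ``tedious'' verification that an error term $\epsilon(x,y)$ vanishes), and then concluding that $H^\pi=\frac{2x}{x+y}H^+(x,y)$ is divisible by $x=N+1$ on $\partial(\mathbb{Z}^2\setminus\Gamma_N)$ precisely because $N+1$ is prime. You instead exploit the translation freedom that the paper leaves unused in Lemma~\ref{lemma:divFirstExample}: placing $\Gamma_N=\{1,\dots,N\}^2$ so that the four lines carrying $\partial(\mathbb{Z}^2\setminus\Gamma_N)$ are $x=0$, $y=0$, $x=N+1$, $y=N+1$ makes the harmonic polynomial $H=xy$ divisible by $N+1$ on the entire outer boundary, while $H(1,1)=1$ gives coprimality on $\Gamma_N$; the general recipe preceding Lemma~\ref{lemma:divFirstExample} (which rests on Lemma~\ref{lemma:exactSequence}, and whose ``order exactly $n$'' claim is exactly what the coprimality hypothesis guarantees) then yields the desired cyclic subgroup. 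As you observe, this establishes the strictly stronger statement that $G_{\Gamma_N}$ contains $\mathbb{Z}/(N+1)\mathbb{Z}$ for \emph{every} $N$, which also improves the paper's follow-up remark for composite $N+1$, where combining Theorem~\ref{theorem:main} with the prime case only produces the product of the distinct prime factors of $N+1$ (e.g.\ order $3$ rather than $9$ for $N=8$). What the paper's construction buys in exchange is a single, centrally symmetric harmonic function on $\mathbb{Z}^2$ whose divisibility pattern is tracked uniformly across the nested centered squares and whose arithmetic is of independent interest; for the lemma as stated, your argument is complete, shorter, and stronger.
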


If $N+1$ is not prime, Theorem~\ref{theorem:main} implies that there exist group monomorphisms from $G_{\Gamma_M}$ to $G_{\Gamma_N}$ whenever $M+1$ divides $N+1$. Thus, the sandpile group on every $N\times N$ domain (independently if $N+1$ is prime or not) possesses a cyclic subgroup with an order given by the product of all distinct factors of $N+1$ (each to the power of one). Furthermore, when we denote by $p_k$ the $k^{th}$ prime number, the limit $G_{\mathbb{Z}^2}^S$ of the sandpile group with respect to the sequence $S\in\PP^\catomega$ with $\Gamma(S_i)=\Gamma_{\prod_{k\leq i}(p_k-1)}$ contains at least one cyclic subgroup $\mathbb{Z}/p_k\mathbb{Z}\subset G_{\mathbb{Z}^2}^S$ for every prime number $p_k$.

Our last lemma is based on the existence of integer-valued harmonic functions which are zero in the interior of some diamond-shaped region of $\mathbb{Z}^2$, take values $\pm 1$ at its edges, the value $0$ at its corners, and which are divisible by four everywhere else (Figure~\ref{fig:primeHarmonic}C). 
By the discussion above, every such harmonic function for which all four corners lie on the boundary $\partial(\mathbb{Z}^2\setminus\Gamma_N)$ of the complement of an $N\times N$ domain $\Gamma_N\subset\mathbb{Z}^2$ can be directly mapped to a cyclic subgroup $\mathbb{Z}/4\mathbb{Z}\subseteq G_{\Gamma_N}$ of order four.
If we also allow for one ``degenerated diamond'' in case $N$ is odd, there exist $N$ such harmonic functions which are linearly independent.
\begin{lemma}\label{lemma:div4}
Let $G_{\Gamma_N}$ be the sandpile group on an $N\times N$ square domain $\Gamma_N\subset\mathbb{Z}^2$, with $N\in\mathbb{N}$. Then, $G_{\Gamma_N}$ has a subgroup $S_{N}\subseteq G_{\Gamma_N}$ isomorphic to the direct sum $S_{N}\cong\bigoplus_{i=1}^N(\mathbb{Z}/4\mathbb{Z})$ of $N$ cyclic groups of order four.
\end{lemma}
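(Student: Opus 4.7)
The plan is to exhibit, for each $i = 1, \ldots, N$, an integer-valued harmonic function $H_i^\diamond : \mathbb{Z}^2 \to \mathbb{Z}$ of the type described immediately before the lemma (see Figure~\ref{fig:primeHarmonic}C): $H_i^\diamond$ vanishes in the interior of a diamond $D_i$, takes values $\pm 1$ at the non-corner edge points of $D_i$, is $0$ at the four corners of $D_i$, and is divisible by $4$ everywhere else; the four corners of $D_i$ lie on $\partial(\mathbb{Z}^2 \setminus \Gamma_N)$; and the diamonds are chosen nested, $D_1 \supset D_2 \supset \cdots \supset D_N$, allowing the innermost one to be a ``degenerated'' segment when $N$ is odd. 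Each $H_i^\diamond$ will produce an order-$4$ generator $[C_i] \in G_{\Gamma_N}$ via the isomorphism of Lemma~\ref{lemma:exactSequence}, and the nesting of the $D_i$ will force the $[C_i]$ to be independent.

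\textbf{Step 1 (construction of the $[C_i]$).} Set $C_i := -\tfrac{1}{4}\,\Delta_{\Gamma_N}(H_i^\diamond|_{\Gamma_N})$. At interior vertices of $\Gamma_N$ this vanishes by harmonicity of $H_i^\diamond$ on $\mathbb{Z}^2$, while at a boundary vertex $v \in \partial\Gamma_N$ it equals $\tfrac{1}{4}\sum_{w \sim v,\, w \in \partial(\mathbb{Z}^2 \setminus \Gamma_N)} H_i^\diamond(w)$; each such neighbor $w$ is either a corner of $D_i$ (value $0$) or lies off $D_i$ (value divisible by $4$), so $C_i \in \mathbb{Z}^{\Gamma_N}$. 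By Lemma~\ref{lemma:exactSequence}, the class $[C_i] \in G_{\Gamma_N}$ corresponds to the class of $\tfrac{1}{4}H_i^\diamond|_{\Gamma_N}$ in $\mathcal{H}_G^{\Gamma_N}$.

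\textbf{Step 2 (each $[C_i]$ has order exactly $4$).} Since $H_i^\diamond|_{\Gamma_N}$ is itself integer-valued and harmonic on $\Gamma_N$, the class $4[C_i]$ vanishes. Conversely, the edge points of $D_i$ lie inside $\Gamma_N$ (only the corners of $D_i$ were placed on $\partial(\mathbb{Z}^2\setminus\Gamma_N)$) and carry the odd values $\pm 1$, so neither $\tfrac{1}{2}H_i^\diamond|_{\Gamma_N}$ nor $\tfrac{1}{4}H_i^\diamond|_{\Gamma_N}$ is integer-valued; by Lemma~\ref{lemma:exactSequence} this rules out $2[C_i]=0$ and $[C_i]=0$.

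\textbf{Step 3 (independence).} Suppose $\sum_{i=1}^N a_i [C_i] = 0$ in $G_{\Gamma_N}$ with $a_i \in \{0,1,2,3\}$. By Lemma~\ref{lemma:exactSequence} this is equivalent to the pointwise congruence
\begin{equation*}
\sum_{i=1}^N a_i\, H_i^\diamond \;\equiv\; 0 \pmod 4 \qquad \text{on } \Gamma_N.
\end{equation*}
Modulo $4$, each $H_i^\diamond$ is supported only on the edge points of $D_i$, where it takes the values $\pm 1$. By the strict nesting $D_1 \supset D_2 \supset \cdots \supset D_N$, there exists an edge point of $D_1$ lying in $\Gamma_N$ but off all of $D_2, \ldots, D_N$; evaluating the congruence there gives $\pm a_1 \equiv 0 \pmod 4$, hence $a_1 \equiv 0 \pmod 4$. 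Iterating this argument on $D_2, D_3, \ldots, D_N$ (with the already-killed outer terms removed) yields $a_i \equiv 0 \pmod 4$ for every $i$, so $[C_1],\ldots,[C_N]$ generate a subgroup isomorphic to $\bigoplus_{i=1}^N \mathbb{Z}/4\mathbb{Z}$.

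The main obstacle I anticipate is the explicit combinatorial construction of the $N$ nested harmonic functions $H_i^\diamond$ with precisely the stated $0$/$\pm 1$/$4\mathbb{Z}$ pattern around each diamond and the correct handling of the degenerate diamond when $N$ is odd (this is where Figure~\ref{fig:primeHarmonic}C and the basis-construction machinery of Section~\ref{section:basis} should feed in); once such functions are in hand, Steps~1--3 are a direct application of the exact-sequence isomorphism of Lemma~\ref{lemma:exactSequence}.
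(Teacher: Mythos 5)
There is a genuine gap, and it is geometric. Your Steps~1--3 are formally sound deductions \emph{given} the configuration you posit, but that configuration cannot exist: you require simultaneously that every $D_i$ has all four corners on $\partial(\mathbb{Z}^2\setminus\Gamma_N)$ (which you correctly identify as necessary for $C_i=-\tfrac14\Delta_{\Gamma_N}(H_i^\diamond|_{\Gamma_N})$ to be integer-valued, since the $\pm1$ edge values must not appear among the exterior neighbors of $\partial\Gamma_N$) and that the diamonds are strictly nested, $D_1\supset\cdots\supset D_N$. A diamond whose four corners lie on the frame $\partial(\mathbb{Z}^2\setminus\Gamma_N)$ meets that frame \emph{only} at its corners (its sides run at $45^\circ$ between adjacent sides of the axis-aligned frame), so any second inscribed diamond contained in the first would have to have its four corners among the four corners of the first, forcing the two to coincide. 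Hence for $N\geq 2$ no two, let alone $N$, such diamonds can be nested, and the independence argument of Step~3 (``pick an edge point of $D_1$ off all the other diamonds'') has nothing to apply to. The actual family used in the paper consists of \emph{crossing}, not nested, diamonds: $H^\diamond_i=\Hag{i}\pm\Hbg{N-i+1}\pm\Hal{-i}\pm\Hbl{-N+i-1}$ vanishes on a rotated rectangle with half-widths $i$ and $N-i+1$ along the two diagonal directions, and these rectangles are pairwise transversal, so a single-point evaluation does not isolate one coefficient at a time.

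The second, related, gap is that you defer the construction of the $H^\diamond_i$ entirely, but that construction is where essentially all the content of the lemma lives. The paper builds each $H^\diamond_i$ as a signed sum of four of the half-plane-supported basis functions $\Hag{i},\Hal{-i},\Hbg{j},\Hbl{-j}$ of Section~\ref{section:basis} (with the free values chosen so each is divisible by four off its defining diagonal), choosing signs so the $\pm1$ boundary values cancel; and it then proves independence not by point evaluation but by showing that the $H^\diamond_i|_{\Gamma_N}$ can be completed to a basis of $\mathcal{H}^{\Gamma_N}_{\mathbb{Z}}$ via elementary transformations of the basis from Corollary~\ref{corollary:properSquareBasis}. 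Being part of a $\mathbb{Z}$-basis, no nontrivial combination $\sum_i \tfrac{a_i}{4}H^\diamond_i$ with $a_i\not\equiv 0\pmod 4$ can be integer-valued, which is exactly the independence you need. To repair your proof you would either have to reproduce this basis argument or find, for the genuine crossing configuration, a triangular system of evaluation points -- neither of which your proposal supplies.
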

We note that this result was derived before in \cite{Dhar1995}, and formed the basis for the proof that the minimal number of generators for the sandpile group on $\Gamma_N$ is $N$. Our alternative proof arguably provides more insights by directly constructing the respective cyclic subgroups of $G_{\Gamma_N}$, and can be easily generalized to other domains.

\subsubsection*{Acknowledgements}
The authors thank Nikita Kalinin for carefully reading the manuscript. 
ML is grateful to the members of the Guet group for valuable comments and support. 
MS is grateful to Tamas Hausel, Ludmil Katzarkov, Maxim Kontsevich, Ernesto Lupercio, Grigory Mikhalkin and Andras Szenes for inspiring communications.

\section{Overview of the proofs}
We first derive the two related Lemmata~\ref{lemma:exactSequence} and \ref{lemma:order}. 
Lemma~\ref{lemma:exactSequence} then allows us to restate the question on the existence of monomorphisms between sandpile groups (Theorem~\ref{theorem:main}) into a question on the existence of monomorphisms between groups of harmonic functions, which we solve by an explicit construction. We then continue to state an algorithm for the construction of a basis for the integer-valued harmonic functions on a given finite convex domain. Finally, we use this basis to prove Lemmata~\ref{lemma:divPrime}\&\ref{lemma:div4}.
We note that Corollary~\ref{equivalenceLimits} and Lemma~\ref{lemma:divFirstExample} were directly proved in the Introduction.

\section{The sandpile group and harmonic functions}
In this section, we derive the isomorphism between the harmonic functions $\mathcal{H}_G^\Gamma$ and the sandpile group $G_\Gamma$ (Lemma~\ref{lemma:exactSequence}), as well as the formula for the order of the sandpile group (Lemma~\ref{lemma:order}). We start with an observation made by Creutz about the sandpile model, namely that every recurrent configuration can be reached from the empty configuration (or any other configuration) by only adding particles to the boundary of the domain and ``relaxing'' the sandpile \cite{Creutz1990}. Recall that the elements of the sandpile group, as defined in this article, correspond to the equivalence classes of the recurrent configurations of the sandpile model (see Introduction). 
Creutz's observation can thus be restated as follows: for every element $C\in G^\Gamma$ of the sandpile group, there exist (infinitely many) functions $X\in\mathbb{Z}^{\Gamma}$ which only have support at the boundary $\partial\Gamma$ of the domain, and which satisfy that $[X]=C$, with $[.]:\mathbb{Z}^\Gamma\rightarrow G_\Gamma$ the canonical projection map to the sandpile group. For an algorithm for the construction of $X$, we refer to \cite{Creutz1990}. 

By the existence and uniqueness of solutions to the discrete Dirichlet problem on convex domains \cite{Lawler2010}, it follows that, for every such $X\in\mathbb{Z}^{\Gamma}$, there exists a unique rational-valued harmonic function $H_X\in\mathcal{H}_\mathbb{Q}^\Gamma$ with $\Delta_\Gamma H_X=-X$. 
The composition $[.]\circ-\Delta_\Gamma$ of the discrete Laplacian with the canonical projection map then maps two harmonic functions $H_{X,1},H_{X,2}\in\mathcal{H}_\mathbb{Q}^\Gamma$, $\Delta_\Gamma H_{X,1},\Delta_\Gamma H_{X,2}\in\mathbb{Z}^{\Gamma}$, to the same element of the sandpile group if and only if $-\Delta_\Gamma(H_{X,1}-H_{X,2})\in\Delta_\Gamma(\mathbb{Z}^\Gamma)$. Since both $\Delta_\Gamma H_{X,1}$ and $\Delta_\Gamma H_{X,2}$ only have support at the boundary $\partial\Gamma$ of the domain, $H_{X,1}-H_{X,2}$ is thus an integer-valued harmonic function, which concludes our proof of Lemma~\ref{lemma:exactSequence}.

We construct the inverse of $-\Delta_\Gamma:\mathcal{H}_G^\Gamma\cong G_\Gamma$ in two steps.
For every configuration $C\in G_\Gamma$, we first define the coordinates $\sigma_\Gamma:G_\Gamma\rightarrow(\mathbb{Q}/\mathbb{Z})^{\partial\Gamma}$,
\begin{align*}
\sigma_\Gamma(C)\equiv-(\Delta\mathcal{B}^\Gamma_\mathbb{Z})^{-1}X\ \ (\operatorname{mod} 1),
\end{align*}
with respect to the basis $\mathcal{B}^\Gamma_\mathbb{Z}$. Note that, for two different choices $X^\alpha,X^\beta\in\mathbb{Z}^{\Gamma}$, $-(\Delta\mathcal{B}^\Gamma_\mathbb{Z})^{-1}(X^\alpha-X^\beta)\in\mathbb{Z}^{\partial\Gamma}$, and that thus the coordinates $\sigma_\Gamma$ don't depend on the specific choice for $X$. This also implies that $\sigma_\Gamma$ correspond to toppling invariants as defined in \cite{Dhar1995}.

In the second step, we then define the function $\phi_\Gamma:(\mathbb{R}/\mathbb{Z})^{\partial\Gamma}\rightarrow\mathcal{H}_\mathbb{R}^\Gamma/\mathcal{H}_\mathbb{Z}^\Gamma$, $\phi_\Gamma(s)=\sum_{i=1}^{|\partial\Gamma|}s_iB_i$. It is easy to check that the composition $\phi_\Gamma\circ\sigma_\Gamma:G_\Gamma\rightarrow\mathcal{H}_\mathbb{R}^\Gamma/\mathcal{H}_\mathbb{Z}^\Gamma$ is independent of the choice of the basis $\mathcal{B}^\Gamma_\mathbb{Z}$, and that $-\Delta_\Gamma\phi_\Gamma(\sigma_\Gamma([X]))=[X]$. The latter implies that $\phi_\Gamma\circ\sigma_\Gamma$ is the inverse of $-\Delta_\Gamma$.

The isomorphism between the sandpile group $G_\Gamma$ and $\mathcal{H}_G^\Gamma$  proposes to consider the sandpile group as a discrete subgroup of a continuous Lie group isomorphic to $\mathcal{H}_\mathbb{R}^\Gamma/\mathcal{H}_\mathbb{Z}^\Gamma$, to which we refer to as the extended sandpile group $\tilde{G}_\Gamma$ \cite{Lang2019}. 
More precisely, the extended sandpile group is an extension of the torus $(\mathbb{R}\setminus\mathbb{Z})^{\partial\Gamma}$ by the usual sandpile group, and is defined by the exact sequence
\begin{align*}
\xymatrix{
0 \ar[r] & G_\Gamma \ar[r] & \tilde{G}_\Gamma \ar[r] & (\mathbb{R}/\mathbb{Z})^{\partial\Gamma} \ar[r] &  0.
}
\end{align*}
In terms of the sandpile model, this Lie group is obtained by allowing each vertex $b\in\partial\Gamma$ in the boundary $\partial\Gamma$ of the domain to carry a real value $\tilde{C}(b)\in[0,4)$ of particles, while each vertex $v\in\Gamma_0$ in the interior $\Gamma_0=\Gamma\setminus\partial\Gamma$ of the domain is still only allowed to carry an integer number of particles, i.e. $\tilde{C}(v)\in\{0,1,2,3\}$ (the toppling rules are kept unchanged) \cite{Lang2019}. This definition lifts $\phi_\Gamma:(\mathbb{R}/\mathbb{Z})^{\partial\Gamma}\cong \tilde{G}_\Gamma$ to a group isomorphism, and a left-inverse of the inclusion map $G_\Gamma\rightarrow\tilde{G}_\Gamma$ is given by the floor function $\lfloor.\rfloor:\tilde{G}_\Gamma\rightarrow G_\Gamma$. We thus naturally arrive at the function $f=\lfloor.\rfloor\circ-\Delta_\Gamma\circ\phi_\Gamma:(\mathbb{R}/\mathbb{Z})^{\partial\Gamma}\rightarrow G_\Gamma$, $f(s)
=-[\lfloor\sum_{i=1}^{|\partial\Gamma|} s_i \Delta_\Gamma B_i\rfloor]$, which justifies to interpret the usual sandpile group $G_\Gamma$ as the discretization of an $|\partial\Gamma|$-dimensional torus \cite{Lang2019}.
 
Due to the properties of the floor function, the preimage $f^{-1}(C)$ of an element $C\in G_\Gamma$ of the sandpile group under $f$ is connected. Denote by $\vol(f^{-1}(C))$ the volume of this preimage, with $\vol((\mathbb{R}/\mathbb{Z})^{\partial\Gamma})=1$. Since, for every $C\in G_\Gamma$, there exists a coordinate transformation $s\mapsto\tilde{s}$ such that $C$ has coordinates $\tilde{s}=0$, we get that $\vol(f^{-1}(C))=\vol(f^{-1}(\mathbf{0}))=\frac{1}{|G_\Gamma|}$ for all $C\in G_\Gamma$, with $\mathbf{0}$ the identity of the sandpile group. The preimage $f^{-1}(\mathbf{0})$ of the identity under $f$ forms a $|\partial\Gamma|$-parallelotope with edges $g_i$ given by $(\Delta\mathcal{B}^\Gamma_\mathbb{Z})g_i=e_i$, with $(e_i)_j=\delta_{ij}$ the $i^{th}$ unit vector and $\delta_{ij}$ the Kronecker delta. The volume of this parallelotope is $\vol(f^{-1}(\mathbf{0}))=|\det(\Delta\mathcal{B}^\Gamma_\mathbb{Z})^{-1}|$, and thus $|G_\Gamma|=|\det(\Delta\mathcal{B}^\Gamma_\mathbb{Z})|$, which proves Lemma~\ref{lemma:order}.

\section{Construction of sandpile monomorphisms}\label{proof:main}

\begin{figure}[tb!]
	\centering
	 	\includegraphics[width=0.6\textwidth]{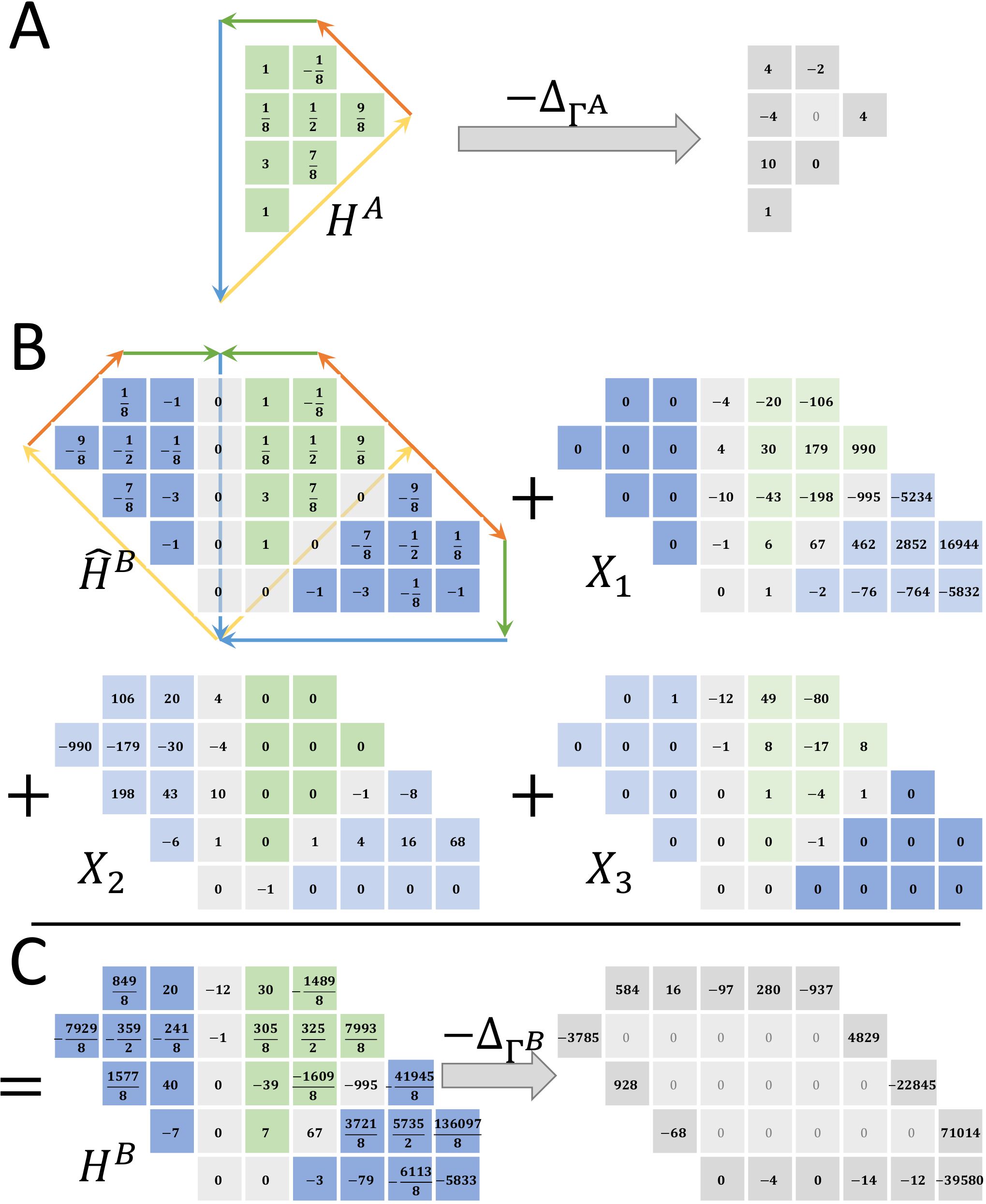}
	\caption{A) Example of an harmonic function $H_A\in\mathcal{H}_G^{\Gamma_A}$ (left) corresponding to the element $[-\Delta_{\Gamma_A}H_A]\in G_{P_A}$ of the sandpile group (right) on a given $M$-polyform $P_A$. The directed and colored edges of $P_A^{DC}$ are indicated by arrows, and the green squares corresponds to the vertices in $\Gamma_A=\Gamma(P_A)$. 
	B) The $M$-polyform $P_A$ from (A) $DC$-tiles the depicted $M$-polyform $P_B$. To construct the harmonic function $H_B\in\mathcal{H}_G^{\Gamma_B}$ onto which $H_A$  is mapped under the monomorphism from $\mathcal{H}_G^{P_A}$ to $\mathcal{H}_G^{P_B}$ induced by this tiling, we first define a rational-valued function $\hat{H}_B$ which is harmonic everywhere, except for the vertices directly next to an internal-boundary (gray backgrounds). For each tile in $T^{P_A\rightarrow P_B}$, we then construct an integer-valued function $X_i$ which cures the non-harmoniticity of its respective vertices. 
	C) The procedure depicted in (B) leads to the harmonic function $H_B\in\mathcal{H}_G^{\Gamma_B}$ (left) corresponding to the element $[-\Delta_{\Gamma_B}H_B]\in G_{P_B}$ of the sandpile group (right) on $P_B$, onto which $H_A$ is mapped by the monomorphism from $\mathcal{H}_G^{P_A}$ to $\mathcal{H}_G^{P_B}$.
	}
	\label{fig:monomorphismProof}
\end{figure}

Let $P_A$ and $P_B$ be two convex $M$-polyforms, and assume that there exists a $DC$-tiling $T^{P_A\rightarrow P_B}$ of $P_B$ by $P_A$. In this section, we then construct a monomorphism from the sandpile group on the domain $\Gamma_A=\Gamma(P_A)=\mathbb{Z}^2\cap P_A$ to the sandpile group on $\Gamma_B=\Gamma(P_B)=\mathbb{Z}^2\cap P_B$, and thus prove Theorem~\ref{theorem:main}. Before starting this construction, we derive three properties of $DC$-tilings. 

\begin{corollary}
The domains of different tiles do not overlap, i.e. $\Gamma_i\cap\Gamma_j=\{\}$ for all $i\neq j$. 
\end{corollary}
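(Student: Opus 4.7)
The plan is to deduce this directly from the open-set interpretation of $M$-polyforms combined with the non-overlap property intrinsic to the definition of a tiling. By construction, each $M$-polyform $P$ is the open set $\operatorname{int}(\bigcup_{m\in P} m)$, so every tile $T_i$ in the DC-tiling $T^{P_A\to P_B}$ is itself an open subset of $\mathbb{R}^2$ obtained by applying an automorphism of $M$ to $P_A$.

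The first step would be to translate the combinatorial disjointness of the tiles (no triangle of $P_B$ is assigned to two distinct tiles) into topological disjointness of the corresponding open sets. Suppose for contradiction that some $x\in T_i\cap T_j$ with $i\neq j$. Because $M$ is itself a tiling of $\mathbb{R}^2$ whose triangles have pairwise disjoint interiors, exactly two cases can occur. Either $x$ lies in the interior of a unique triangle $m\in M$, in which case $x\in T_i$ forces $m\in T_i$ and the symmetric argument forces $m\in T_j$, contradicting the fact that the tiles partition the triangles of $P_B$. Or $x$ lies on an edge or vertex shared by several triangles of $M$; openness of $T_i$ at $x$ then implies that some entire neighbourhood of $x$ lies in $\bigcup_{m\in T_i} m$, which is only possible if every triangle of $M$ incident to $x$ belongs to $T_i$, and symmetrically to $T_j$, giving the same contradiction.

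Having established $T_i\cap T_j=\emptyset$ as subsets of $\mathbb{R}^2$, the statement follows immediately, since
\begin{equation*}
\Gamma_i\cap\Gamma_j = (T_i\cap\mathbb{Z}^2)\cap(T_j\cap\mathbb{Z}^2) = (T_i\cap T_j)\cap\mathbb{Z}^2 = \emptyset.
\end{equation*}
I do not foresee any genuine obstacle here: the corollary is essentially bookkeeping built into the definitions of $M$-polyform and tiling. The only subtle point is that one must remember that lattice points $v\in\mathbb{Z}^2$ sitting on a shared edge of two adjacent tiles belong to the boundary of both open tiles and hence to neither $\Gamma_i$ nor $\Gamma_j$ (such boundary vertices are precisely the extra ``red'' vertices of $\Gamma(P_B)$ described in Figure~\ref{fig:initialIllus}C).
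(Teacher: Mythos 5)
Your argument is correct and is essentially the paper's own proof, spelled out in more detail: the paper likewise dismisses this as an immediate consequence of each tile being an open subset of $\mathbb{R}^2$ (so that adjacent tiles are disjoint as open sets and lattice points on shared edges fall into neither domain). No issues.
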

\begin{proof}
This corollary directly follows from each tile $P_i$ being treated as an open subset of $\mathbb{R}^2$ when determining its domain  $\Gamma_i=\mathbb{Z}^2\cap P_i$.
\end{proof}

The domains of the tiles in general don't cover $\Gamma_B$. Specifically, all vertices of $\Gamma_B$ which lie directly on common edges (including their endpoints) of two tiles are not elements of any $\Gamma_i$ (red vertices in Figure~\ref{fig:initialIllus}). We refer to the set $\partial^T\Gamma_B=\Gamma_B\setminus\bigcup_i\Gamma_i$ of these vertices as the internal boundaries of the tiling. These internal boundaries separate the domains $\Gamma_i$ in the following sense:
\begin{corollary}\label{corollary:separation}
The removal of all vertices in $\partial^T\Gamma_B$ splits $\Gamma_B$ into the disconnected components $\{\Gamma_1, \Gamma_2,\ldots,\Gamma_{|T^{P_A\rightarrow P_B}|}\}$.
\end{corollary}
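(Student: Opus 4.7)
The plan is to verify three things in sequence: (i) $\Gamma_B\setminus\partial^T\Gamma_B$ coincides as a set with the disjoint union $\bigsqcup_i\Gamma_i$, (ii) each $\Gamma_i$ induces a connected subgraph of $\Gamma_B$, and (iii) no edge of $\Gamma_B$ joins a vertex of $\Gamma_i$ to one of $\Gamma_j$ when $i\neq j$. Claim (i) is immediate from the definition $\partial^T\Gamma_B=\Gamma_B\setminus\bigcup_i\Gamma_i$ together with the disjointness granted by the preceding corollary, so the real work lies in (ii) and especially (iii).

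For (ii) I would argue as follows: each tile $P_i$ in $T^{P_A\rightarrow P_B}$ is by construction a congruent copy of the convex $M$-polyform $P_A$, hence itself a bounded convex $M$-polyform. Its domain $\Gamma_i=\mathbb{Z}^2\cap P_i$ is therefore the set of lattice points in a bounded convex open region, and is connected under $\mathbb{Z}^2$-adjacency for exactly the same reason $\Gamma(P_A)$ is connected in the running setup of the paper (any two lattice points of a convex planar region are joined by a staircase of unit horizontal/vertical steps staying inside the region).

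For (iii) I would argue by contradiction. Suppose $v\in\Gamma_i$ and $w\in\Gamma_j$ with $i\neq j$ and $|v-w|=1$. The unit segment $[v,w]$ is horizontal or vertical and is the common base of exactly two $M$-triangles $T^{+},T^{-}$ (one on each side), both of which have $v$ and $w$ among their base corners. The key geometric lemma I would make explicit is the following local statement: a point $u\in\mathbb{Z}^2$ lies in the \emph{open} set $P_i=\mathrm{int}\bigl(\bigcup_{m\in P_i}m\bigr)$ if and only if every one of the eight $M$-triangles incident to $u$ belongs to $P_i$. One direction follows because these eight triangles jointly cover a full Euclidean neighborhood of $u$: in each of the four unit squares meeting at $u$, the two triangles having $u$ as a base corner fill the half of that square closest to $u$. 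The other direction follows because the interior of any $M$-triangle not belonging to $P_i$ is disjoint from $\bigcup_{m\in P_i}m$, so a single missing triangle produces points of $\mathbb{R}^2\setminus P_i$ arbitrarily close to $u$. Applying this lemma at $v$ forces $T^{+},T^{-}\in P_i$; applying it at $w$ forces $T^{+},T^{-}\in P_j$. Since the tiles of $T^{P_A\rightarrow P_B}$ partition the $M$-triangles of $P_B$, no triangle can lie in two distinct tiles, contradicting $i\neq j$.

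The main obstacle I anticipate is precisely this local ``full-neighborhood'' observation underlying (iii): one must unpack the combinatorics of the $M$-tiling around an integer point just carefully enough to see that its eight incident isosceles triangles really do tile a neighborhood of the point, so that a lattice point of $P_B$ sits in the open tile $P_i$ exactly when all such triangles are in $P_i$. Once this is in hand, (ii) and (iii) combine with the set-theoretic identity in (i) and the previous corollary to give exactly the decomposition of $\Gamma_B\setminus\partial^T\Gamma_B$ into the connected components $\Gamma_1,\ldots,\Gamma_{|T^{P_A\rightarrow P_B}|}$.
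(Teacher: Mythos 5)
Your step (iii) is correct, and it is the heart of the corollary; it also takes a genuinely different route from the paper. The paper argues via convexity: two adjacent tiles are separated by the line extending their common edge, which by the structure of $M$ is horizontal, vertical or diagonal and passes through infinitely many lattice points, and removing the lattice points of such a line disconnects $\mathbb{Z}^2$. Your argument is instead purely local: a lattice point lies in the open tile $P_i$ exactly when all eight incident $M$-triangles belong to $P_i$ (these eight triangles form the closed unit diamond around the point, hence cover a neighborhood of it), so a unit edge between $\Gamma_i$ and $\Gamma_j$ would force the two triangles based on that edge into both tiles, contradicting the partition property of the tiling. This local lemma is correct, and your version has the advantage of not using convexity of the tiles at all and of treating non-adjacent tiles uniformly, whereas the paper's line-separation argument, read literally, only addresses adjacent pairs.

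Step (ii), however, contains a genuine gap. The claim that any two lattice points of a convex planar region are joined by a staircase of unit steps staying inside the region is false: a thin convex neighborhood of a segment of slope $1$ meets $\mathbb{Z}^2$ only along one diagonal, and consecutive lattice points there are not $\mathbb{Z}^2$-adjacent. Nor is this a pathology outside the paper's setting: the sixteen $M$-triangles incident to $(0,0)$ and to $(1,1)$ form a convex $M$-polyform (the open $45^\circ$-rotated rectangle with corners $(0,-1)$, $(2,1)$, $(1,2)$, $(-1,0)$) whose domain is $\{(0,0),(1,1)\}$ and hence disconnected; used as a tile, its copies $\Gamma_i$ would not be connected components. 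So your justification of (ii) fails as written, and connectivity of each $\Gamma_i$ would need either an extra hypothesis on the tiles or a different argument. To be fair, the paper's own proof does not address connectivity of the $\Gamma_i$ at all, and only the separation property (iii) is used later (in the computation of $\Delta_{\Gamma_B}\hat{H}_B$), so you have at worst inherited an imprecision of the statement; but the staircase claim should be removed or repaired rather than asserted.
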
 
\begin{proof}
Because the tiles are convex by assumption, each pair of adjacent tiles can be separated by exactly one line (the extension of their common edge). By the definition of $M$, this line is either horizontal, vertical or diagonal, and passes through infinitely many vertices of $\mathbb{Z}^2$ (Figure~\ref{fig:initialIllus}A). In each of the cases, it splits $\mathbb{Z}^2$ into two unconnected components, from which the corollary directly follows.
\end{proof}

By definition, each tile $P_i\in T^{P_A\rightarrow P_B}$ can be obtained from $P_A^{DC}$ by a combination of translations, rotations and reflections. If this is possible by  using only translations and rotations, we assign the sign $s(P_i)=+1$ to the tile, and otherwise the sign $s(P_i)=-1$. This definition also induces signs $s(v)=s(\Gamma_i)=s(P_i)$ for the domains $\Gamma_i$ and vertices $v\in\Gamma_i$ belonging to the tiles. To the internal boundaries $\partial^T\Gamma_B$ and their vertices $b\in\partial^T\Gamma_B$, we assign the sign $s(\partial^T\Gamma_B)=s(b)=0$.
The relationship of each tile $P_i$ with the polyform $P_A$ (i.e. the translations, rotations and reflections mapping $P_A$ on $P_i$) corresponds to a function $\psi_i:\Gamma_A\rightarrow\Gamma_B$ which maps vertices $v_A\in \Gamma_A$ of the polyform onto their corresponding vertices $v_i$ of the tile. For two vertices $v,w\in\Gamma_B$, we then define the equivalence relation $\equiv_{DC}$ such that $v\equiv_{DC}w$ if there exists a $v_A\in \Gamma_A$ such that $v=\psi_i(v_A)$ and $w=\psi_j(v_A)$ for some tiles $P_i$ and $P_j$, or if both vertices are part of the internal boundaries, i.e. $v,w\in\partial^T\Gamma_B$. We denote by $[v]_{DC}$ the equivalence class of $v$ induced by $\equiv_{DC}$.
\begin{corollary}\label{corollary:IBs}
Let $b\in\partial^T\Gamma_B$ be a vertex of the internal boundaries. Then, the number of neighbors of $b$ in every equivalence class $[v_B]_{DC}$, $v_B\in\Gamma_B$ carrying a positive sign is equal to the number of neighbors carrying a negative sign, i.e. $\sum_{\stackrel{v\in[v_B]_{DC}}{v\sim b}}s(v)=0$.
\end{corollary}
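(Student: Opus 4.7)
The plan is to reduce the signed sum to a cyclic sum over the tiles incident to $b$, and then use the $DC$-matching to show the summand is constant along this cycle. Throughout, let $\tau_i:\mathbb{R}^2\to\mathbb{R}^2$ denote the ambient rigid motion mapping $P_A$ to $P_i$; its restriction to $\Gamma_A$ is the map $\psi_i$.

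First I identify which tiles can contribute. Any neighbor $v\sim b$ with $v\in\partial^T\Gamma_B$ has $s(v)=0$, so I may restrict attention to $v\in\Gamma_i$ for some tile $P_i$. Since $v\in\Gamma_i$ means $v$ is interior to the open set $P_i$, all eight $M$-triangles meeting at the lattice point $v$ lie in $P_i$; because $b$ is a lattice neighbor of $v$, two of these eight triangles have $b$ as their third vertex, so $b\in\overline{P_i}$. Setting $\mathcal{T}(b)=\{P_i:b\in\overline{P_i}\}$, only these tiles can contribute. The same ``eight-triangle'' reasoning applied to $b\in\Gamma_B$ shows that $b$ is interior to $P_B$, so the tiles of $\mathcal{T}(b)$ cover an open disk around $b$; convexity of each tile forces it to occupy a single angular sector at $b$, making $\mathcal{T}(b)$ a cyclic sequence in which consecutive tiles share an internal edge through $b$.

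Next I show adjacent tiles at $b$ carry opposite signs. If $P_i,P_j\in\mathcal{T}(b)$ share the edge $e$, the $DC$-matching forces both $\tau_i$ and $\tau_j$ to pull $e$ back to the unique same-colored edge $e_A\subset\partial P_A$, so $\tau_i^{-1}\tau_j$ fixes $e_A$ as a directed segment. The only non-identity isometry of $\mathbb{R}^2$ with this property is the reflection $R_{e_A}$ across the line containing $e_A$, forcing $\tau_j=\tau_iR_{e_A}$. Since this composition reverses orientation, $s(P_j)=-s(P_i)$. Hence $|\mathcal{T}(b)|$ is even, signs alternate around the cycle, and in particular $\sum_{P_i\in\mathcal{T}(b)}s(P_i)=0$.

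To finish, fix $v_A\in\Gamma_A$ and consider the class $[\psi_i(v_A)]_{DC}$; I claim the indicator $P_i\mapsto\mathbf{1}[\tau_i(v_A)\sim b]$ is constant on $\mathcal{T}(b)$. Conjugating the previous identity by $\tau_i$ gives $\tau_j(v_A)=R_e(\tau_i(v_A))$, where $R_e=\tau_iR_{e_A}\tau_i^{-1}$ is the reflection across the line containing $e$. By Corollary~\ref{corollary:separation}, this line is horizontal, vertical, or at $45^\circ$ and passes through lattice points, so $R_e$ preserves $\mathbb{Z}^2$ together with its unit-distance pairs, and $R_e(b)=b$ since $b\in e$. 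Therefore $\tau_i(v_A)\sim b\iff\tau_j(v_A)\sim b$, and induction along $\mathcal{T}(b)$ propagates the equivalence to all tiles in the cycle. The sum in the claim then equals either $0$ (no tile contributes) or $\sum_{P_i\in\mathcal{T}(b)}s(P_i)=0$ (every tile contributes); the equivalence class of $\partial^T\Gamma_B$ itself gives $0$ trivially. The main obstacle I anticipate is the geometric bookkeeping of the first paragraph: verifying that $\mathcal{T}(b)$ really is a cyclic single-sector-per-tile arrangement and that any lattice neighbor of $b$ inside a tile drags $b$ into that tile's closure. Once the local picture is fixed, the remainder is a clean conjugation identity plus induction.
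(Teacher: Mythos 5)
Your proof is correct, and it reaches the conclusion by a genuinely different combinatorial route than the paper, while sharing the same geometric kernel. The shared kernel is the observation that two tiles meeting along a common colored, directed edge must be mirror images across that edge (because the identifying isometries $\tau_i,\tau_j$ both pull the edge back to the same directed side of $P_A$, so $\tau_i^{-1}\tau_j$ is the reflection fixing it), hence carry opposite signs. From there the paper proceeds by a case analysis on $|N(b,v_B)|\in\{1,2,3,4\}$, using the fact that internal angles of $M$-polyforms are multiples of $45^\circ$ together with convexity of $P_B$ to exclude or resolve each configuration; the treatment of the cases $|N|=2$ and $|N|=4$ there implicitly assumes the contributing tiles pair up across shared edges. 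You instead organize all tiles whose closure contains $b$ into a cyclic fan (using convexity of the tiles and the fact that $b$ is interior to $P_B$), deduce that the fan has even length with alternating signs, and then prove the key additional fact that the indicator $P_i\mapsto\mathbf{1}[\tau_i(v_A)\sim b]$ is constant around the fan via the conjugated reflection $R_e=\tau_iR_{e_A}\tau_i^{-1}$, which fixes $b$ and preserves adjacency. This buys you a uniform argument with no angle casework, makes the pairing of contributions explicit (closing the small gap in the paper's $|N|=2$ case, where nothing a priori forces the two contributing tiles to be edge-adjacent), and does not use convexity of $P_B$ at all, only convexity of the individual tiles. Both arguments rely on the tiling being edge-to-edge so that the common boundary of adjacent tiles is a full side of each; this is implicit in the definition of a $DC$-tiling and is not a defect specific to your write-up. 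One minor simplification: you do not need $R_e$ to preserve $\mathbb{Z}^2$, only that it is an isometry fixing $b$, since $\tau_j(v_A)$ is a lattice point automatically and adjacency is a unit-distance condition.
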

\begin{proof}
Assume that $b$ has at least one neighbor in $[v_B]_{DC}$; otherwise the corollary is trivially satisfied. Also, assume $v_B\notin\partial^T\Gamma_B$, since otherwise $s(v)=0$ for all $v\in[v_B]_{DC}$, from which the corollary also trivially follows.
Denote by $N(b,v_B)=\{v\in[v_B]_{DC}|v\sim b\}$ the set of neighbors of $b$ in the equivalence class of $v_B$.
Every vertex can have maximally four neighbors, thus $|N(b,v_B)|\leq 4$.
Being part of the internal boundaries, $b$ must lie on at least one common edge (including endpoints) of two tiles $P_i\neq P_j$. These two tiles can be mapped onto one another by reflection on the common edge, and thus must have opposite signs. If a vertex $v\in\Gamma_i$ of $P_i$ is a neighbor of $b$, it follows that there must be a vertex $w\in\Gamma_j$ of $P_j$ which is also a neighbor of $b$, and which has opposite sign, i.e. $s(w)=-s(v)$. This excludes the case $|N(b,v_B)|=1$, and proves the corollary for $|N(b,v_B)|=2$.
For $|N(b,v_B)|\in\{3,4\}$, the structure of $M$ directly implies that $b$ has to lie on a common corner of three, respectively four, tiles. The corresponding internal angles of the tiles have to be smaller or equal to $360^\circ/3=120^\circ$, respectively $360^\circ/4=90^\circ$. The definition of $M$ only admits internal angles which are multiples of $45^\circ$ (Figure~\ref{fig:initialIllus}A). Thus, in both cases, only angles of $45^\circ$ or $90^\circ$ are possible. An angle of $45^\circ$ is only possible if all $v\in[v_B]$ lie on the internal boundaries, which implies $s(v)=0$ (see above). If the angle is $90^\circ$, $|N(b,v_B)|=3$ would imply that $P_B$ is not convex, which can thus be excluded. Finally, if the angle is $90^\circ$ and $|N(b,v_B)|=4$, each of the four tiles to which these vertices belong must have exactly two adjacent tiles with opposite signs, from which the corollary follows.
\end{proof}
 
With this preparatory work, we can now prove Theorem~\ref{theorem:main}. By Lemma~\ref{lemma:exactSequence}, the sandpile group $G_\Gamma$ is isomorphic to $\mathcal{H}_G^\Gamma=\{H\in\mathcal{H}_\mathbb{Q}^\Gamma|\left.\Delta_\Gamma H\right|_{\partial_\Gamma}\in\mathbb{Z}^{\partial\Gamma}\}/\mathcal{H}_\mathbb{Z}^\Gamma$. It thus suffices to construct a monomorphism from $\mathcal{H}_G^{P_A}$ to $\mathcal{H}_G^{P_B}$ whenever the $M$-polyform $P_A$ $DC$-tiles $P_B$. We construct this monomorphism in two steps. 
For the first step, assume that $T^{P_A\rightarrow P_B}$ is a given $DC$-tiling of $P_B$ by $P_A$, and let $H_A$ be a harmonic function in $\mathcal{H}_G^{\Gamma_A}$ (Figure~\ref{fig:monomorphismProof}A). Then, define the rational-valued function $\hat{H}_B\in\mathbb{Q}^{\Gamma_B}$ in the following way (Figure~\ref{fig:monomorphismProof}B): for each vertex $v\in\Gamma_i$ belonging to tile $P_i\in T^{P_A\rightarrow P_B}$, set $\hat{H}_B(v)=s(P_i)H_A(v_A)$ with $v_A\in\Gamma_A$ the unique vertex satisfying $\psi_i(v_A)=v$. Otherwise, that is if $v$ belongs to the internal boundaries, set $\hat{H}_B(v)=0$.

Because $\hat{H}_B(b)=0$ for all vertices $b\in\partial^T\Gamma_B$ of the internal boundaries, Corollary~\ref{corollary:separation} implies that $\Delta_{\Gamma_B}\hat{H}_B(v)=s(\Gamma_i)\Delta_{\Gamma_A} H_A(v_A)$ for all vertices $v\in\Gamma_i$ belonging to the domain of a tile $P_i$, with $\psi_i(v_A)=v$. This implies that the Laplacian of $\hat{H}_B$ is zero in the interior of the sub-domains $\Gamma_i$ of $\Gamma_B$, and integer-valued at their boundaries.
From Corollary~\ref{corollary:IBs}, on the other hand, it follows that $\Delta_{\Gamma_B}\hat{H}_B(b)=0$ for every vertex $b\in\partial^T\Gamma_B$ of the internal boundaries. Thus, $\hat{H}_B$ is harmonic nearly everywhere, except at the vertices directly adjacent to (but not including) the internal boundaries, for which $\Delta_{\Gamma_B}\hat{H}_B$ is integer-valued.

The ``harmonic deficit'' of $\hat{H}_B$ can be cured, one tile at a time: for a given tile $P_i$, we can define an integer-valued function $X_i\in\mathbb{Z}^{\Gamma_B}$ whose Laplacian is zero everywhere in the interior of $\Gamma_B$, except for those vertices $\{v\in\partial\Gamma_i\setminus\partial\Gamma_B|\exists b\in\partial^T\Gamma_B:v\sim b\}$ at the boundary of $\Gamma_i$ which are adjacent to at least one vertex of the internal boundaries, for which we require that $\Delta_{\Gamma_B}X_i(v)=-\Delta_{\Gamma_B}\hat{H}_B(v)$. For example, if we set $X_i$ to zero in $\Gamma_i$, $X_i$ directly corresponds to a solution $\tilde{X}_i$ of a Dirichlet problem on $\mathbb{Z}^2\setminus\Gamma_i$ with boundary conditions chosen such that $\sum_{\stackrel{w\in\partial(\mathbb{Z}^2\setminus\Gamma_i)}{w\sim v}}\tilde{X}_i(w)=-s(P_i)\Delta_{\Gamma_B}\hat{H}_B(v)$ for each $v\in\partial\Gamma_i$. Since $\Gamma_i$ is convex, we can always choose these boundary conditions to be integer-valued, and then there exist (infinitely many) integer-valued solutions. As can easily be seen in the following, any such solution results in the same outcome, since the difference $X_i^\alpha-X_i^\beta$ of any two possible choices $X_i^\alpha$ and $X_i^\beta$ is integer-valued harmonic.

Given the functions $X_i$, we define $H_B\in\mathbb{Q}^\Gamma$ by
\begin{align*}
H_B=\hat{H}_B+\sum_i X_i.
\end{align*}
By construction, $H_B$ is harmonic everywhere and has an integer-valued Laplacian. We can thus reinterpret $H_B$ to be an element of $\mathcal{H}_\mathbb{G}^{\Gamma_B}$. It is then easy to see that the function $\xi:\mathcal{H}_G^{P_A}\rightarrow\mathcal{H}_G^{P_B}$, $\xi(H_A)=H_B$, is injective, and that it satisfies $\xi(H_B^1+H_B^2)=\xi(H_B^1)+\xi(H_B^2)$. The function $\xi$ is thus a group monomorphism, and with the isomorphism $-\Delta_{\Gamma}:\mathcal{H}_G^{P}\cong G_P$
from Lemma~\ref{lemma:exactSequence}, we get that $\mu(T^{P_A\rightarrow P_B})=\Delta_{\Gamma_B}\circ\xi\circ(\Delta_{\Gamma_A})^{-1}:G_{P_A}\rightarrowtail G_{P_B}$ is the group monomorphism which we claimed to exist in Theorem~\ref{theorem:main}.

\section{A basis for integer-valued harmonic functions}\label{section:basis}
\begin{figure}[ht!]
	\centering
	 	\includegraphics[width=0.65\textwidth]{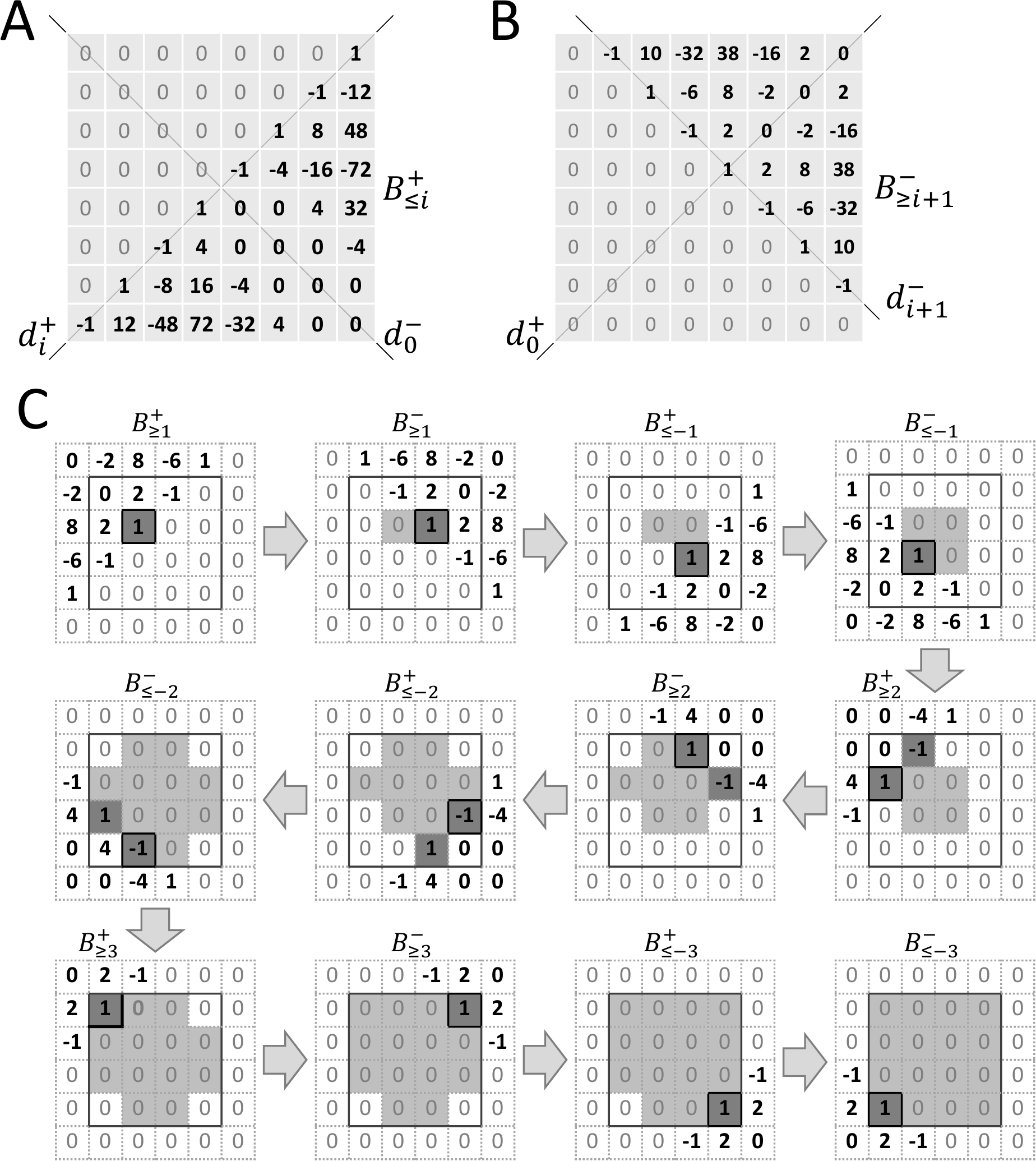}
	\caption{Harmonic functions used in the construction of a basis for $\mathcal{H}^\Gamma_\mathbb{Z}$ (A\&B), and construction of such a basis on a $4\times 4$ square domain (C). 
	A\&B) Two examples of harmonic functions which are zero for all vertices above/below some diagonal (A: $d^+_i$, and B: $d^-_{i+1}$), and which take values in $\{+1,-1\}$ on the diagonal. Both harmonic functions are additionally chosen such that they are zero on the diagonal $d^-_0$ (A), respectively $d^+_0$ (B), which is orthogonal to their respective defining diagonal. The rest of the values of the harmonic functions are chosen such that they are anti-symmetric (A), respectively symmetric (B) with respect to this orthogonal diagonal, depending if the two diagonals intersect on a vertex (B) or not (A).
	C) Each square depicts a step in the construction of the basis by the algorithm in Figure~\ref{fig:basisAlgo}. Light-gray backgrounds denote those vertices already belonging to the growing domain at the beginning of the step, and dark-gray backgrounds those vertices added during the step. The numbers correspond to the harmonic basis function added during the respective step. 
	}
	\label{fig:properBasis}
\end{figure}
In this section, we present an algorithm for the construction of a basis for the module of integer-valued harmonic functions $\mathcal{H}^\Gamma_\mathbb{Z}$ on a finite convex domain $\Gamma\subset\mathbb{Z}^2$. We note that the resulting bases provided important intuition during the research which lead to this article. 

We first define four families of integer-valued harmonic functions from which our algorithm will then select a subset for the construction of the basis.
Denote by $d^+_i=\{(x,y)\in\mathbb{Z}^2|x+y=i+c^+\}$ and $d^-_i=\{(x,y)\in\mathbb{Z}^2|x-y=i+c^-\}$ the diagonals of $\mathbb{Z}^2$, with $c^+,c^-\in\mathbb{Z}$. For a given diagonal, say $d_i^+$, it is then possible to construct an integer-valued harmonic function $\Hag{i}\in\mathcal{H}^\Gamma_\mathbb{Z}$ which is zero for all vertices below $d_i^+$ (i.e. for all $v\in d_j^+$, $j<i$), but non-zero for nearly all vertices on and above $d_i^+$ (i.e. for $v\in d_j^+$, $j\geq i$, see \cite{Buhovsky2017}). For one vertex on each of the non-zero diagonals $d_j^+$, $j\geq i$, the value of $\Hag{i}$ can be freely assigned, which then uniquely determines the value of all other vertices \cite{Buhovsky2017}. 
Here, we only assume that the free value of $\Hag{i}$ on the defining diagonal $d_i^+$ is chosen to be $\pm 1$, while we yet do not pose any restrictions on the choice of the free values on the other diagonals. This construction results in a harmonic function $\Hag{i}$ which is zero below $d_i$ and which alternates between $+1$ and $-1$ on $d_i^+$ (Figure~\ref{fig:properBasis}A\&B). 
Similarly, we denote by $\Hal{i}\in\mathcal{H}^\Gamma_\mathbb{Z}$ an harmonic function which is zero \textit{above} $d_i^+$ and takes the values $\pm 1$ on $d_i^+$, and by $\Hbg{i}$ and $\Hbl{i}$ the corresponding harmonic functions when replacing $d_i^+$ by $d_i^-$ in the definitions above.

We make the following definitions:
\begin{definition}[Diamond Hull]
Let $\Gamma\subset\mathbb{Z}^2$ be a finite convex domain. Denote by $\ext(\Gamma)=\{v\in\mathbb{Z}^2|v\in\Gamma\vee\exists w\in\Gamma: w\sim v\wedge\sum_{u\in\Gamma}\delta_{u\sim w}=3\}$ the domain obtained by extending $\Gamma$ by all vertices in its complement $\mathbb{Z}^2\setminus\Gamma$ which are the direct neighbors of vertices in $\Gamma$ which already have three neighbors in $\Gamma$. With $\ext^k=\ext^{k-1}\circ\ext$ and $\ext^0(\Gamma)=\Gamma$, we then define the diamond hull of $\Gamma$ as the limit $\diam(\Gamma)=\lim_{k\rightarrow\infty}\ext^k(\Gamma)$.
\end{definition}
\begin{corollary}
For every finite convex domain $\Gamma\subset\mathbb{Z}^2$, $\diam(\Gamma)\subset\mathbb{Z}^2$ is a finite convex domain, too. Furthermore, $|\partial\diam(\Gamma)|=|\partial\Gamma|$.
\end{corollary}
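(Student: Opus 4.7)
The plan is to establish the three assertions---finiteness of $\diam(\Gamma)$, that it is a convex domain, and $|\partial\diam(\Gamma)|=|\partial\Gamma|$---for a single application of $\ext$, and then iterate. Since $\ext^{k}(\Gamma)$ is monotone nondecreasing, once a uniform upper bound is in hand the sequence stabilizes at some finite $K$ and all three properties transfer to $\diam(\Gamma)=\ext^{K}(\Gamma)$.

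For the bound (and hence finiteness) I would check that the four $\ell^{1}$-extremes $(x+y)_{\max}$, $(x+y)_{\min}$, $(x-y)_{\max}$, $(x-y)_{\min}$ of $\Gamma$ are invariants of $\ext$. When $\ext$ adds a vertex $v$ through a witness $w\in\Gamma$ with three in-neighbors, those in-neighbors form a $T$-shape around $w$; a short case analysis on which of the four directions $v-w$ points in shows that each of $x+y$ and $x-y$ at $v$ already coincides with the value at one of $w$'s three in-neighbors (the other two in-neighbors bracket $v$ strictly). Consequently every $\ext^{k}(\Gamma)$ stays inside the fixed lattice diamond defined by these four extremes, which has finitely many lattice points, so the monotone sequence stabilizes.

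For the convex-domain and boundary-equality statements I would induct on the number of iterations. Write $\Gamma'=P'\cap\mathbb{Z}^{2}$ for a convex domain and $V$ for the vertices added by $\ext$. The key local fact is that each $v\in V$ has a unique witness $w$: if $v$ had two neighbors in $\Gamma'$, then a direct convex combination---the midpoint in the case of opposite neighbors, or the midpoint of two auxiliary diagonal lattice points forced into $\Gamma'$ by the three-in-neighbor condition in the case of perpendicular neighbors---would place $v$ in $\mathrm{conv}(\Gamma')\cap\mathbb{Z}^{2}=\Gamma'$, contradicting $v\notin\Gamma'$. The same style of argument prevents any boundary vertex of $\Gamma'$ with at most two in-neighbors from becoming interior in $\ext(\Gamma')$, so that $w\mapsto v$ is a bijection from $\{w\in\partial\Gamma':|N(w)\cap\Gamma'|=3\}$ onto $V$ and $|\partial\ext(\Gamma')|=|\partial\Gamma'|$. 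For convexity I would then prove $\mathrm{conv}(\Gamma'\cup V)\cap\mathbb{Z}^{2}=\Gamma'\cup V$ and take $P''$ to be a small outward Minkowski dilation of this hull. Locally at each $v$---rotating so that $w=(0,0)$, $v=(0,-1)$, and $w$'s in-neighbors are $(\pm 1,0)$ and $(0,1)$---the flap added at $v$ fits inside the triangle with vertices $v,(1,0),(-1,0)$, whose only lattice points are $(\pm 1,0),(0,0),v$, all in $\Gamma'\cup\{v\}$.

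The main obstacle I anticipate is the global form of this last lemma: the several flaps attached simultaneously at the different $v\in V$ must not combine, through Carathéodory combinations mixing two or three elements of $V$, into unintended lattice points in $\mathrm{conv}(\Gamma'\cup V)$. The $\ell^{1}$-invariance from the finiteness step keeps each individual flap small, and the uniqueness-of-witness conclusion keeps distinct flaps from overlapping pathologically, but completing the verification still calls for a careful (though essentially local) case analysis of the possible relative positions of witnesses on $\partial\Gamma'$.
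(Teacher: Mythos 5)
The paper states this corollary without proof, so there is nothing to compare your attempt against; judged on its own terms, your outline has the right shape and a fully correct first half, but the second half contains genuine gaps. The finiteness step is fine: at an added vertex $v=w+(0,-1)$ the values of $x+y$ and $x-y$ equal those at $w+(-1,0)$ and $w+(1,0)$ respectively, so the four extremes are invariant under $\ext$ and the monotone sequence stabilizes inside a fixed finite diamond. However, your ``key local fact'' is false as stated: an added vertex can have two neighbors in $\Gamma'$. Take $\Gamma'=\{(-1,1),(0,1),(1,1),(0,2),(1,0)\}$, which is a convex domain; the vertex $w=(0,1)$ has exactly three neighbors in $\Gamma'$, so $v=(0,0)$ is added, and $v$ is adjacent to both $(0,1)$ and $(1,0)$, yet $v\notin\mathrm{conv}(\Gamma')$ because every point of $\Gamma'$ satisfies $x+2y\geq 1$ while $v$ does not. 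Your perpendicular-case argument silently assumes that the second neighbor is \emph{also} a witness --- only then are the two diagonal points with midpoint $v$ forced into $\Gamma'$. What survives is the weaker statement that the witness of $v$ is unique (there the three-in-neighbor condition does apply to both candidates), which is what the bijection actually needs; but the accounting and the local convexity picture you build on ``$v$ meets $\Gamma'$ only at $w$'' have to be redone.

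Beyond that, two claims are asserted rather than proved, and neither follows by ``the same style of argument.'' First, you need every $v\in V$ to land in $\partial\ext(\Gamma')$, which you never state. Second, excluding that a non-witness boundary vertex becomes interior requires more than midpoints of nearby points: if $u$ has in-neighbors $u+(1,0),u+(0,1)$ and both $u+(-1,0)$ and $u+(0,-1)$ were added with witnesses $u+(-1,1)$ and $u+(1,-1)$, no midpoint of the immediately available points gives a contradiction; one must observe that $u+(-2,1)$ and $u+(1,-2)$ are forced into $\Gamma'$ and that the segment joining them passes through both $u+(-1,0)$ and $u+(0,-1)$, contradicting that these were not already in $\Gamma'$. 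Finally, you yourself concede that the global statement $\mathrm{conv}(\Gamma'\cup V)\cap\mathbb{Z}^2=\Gamma'\cup V$, on which the convexity of $\ext(\Gamma')$ rests, is left unverified. So the proposal is a plausible skeleton, but the convexity and boundary-count parts are not yet a proof.
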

\begin{corollary}\label{ext2diam}
Every harmonic function $H\in\mathcal{H}^\Gamma_R$, $R\in\{\mathbb{Z},\mathbb{Q},\mathbb{R}\}$, on a finite convex domain $\Gamma\subset\mathbb{Z}^2$ can be uniquely extended to the diamond hull of $\Gamma$, i.e. there exists a unique $\hat{H}\in\mathcal{H}^{\diam(\Gamma)}_R$ such that $\hat{H}|_\Gamma=H$. The domain $\diam(\Gamma)$ is maximal with respect to this property.
\end{corollary}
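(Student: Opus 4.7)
The plan is to prove existence and uniqueness of the extension by induction on the iterative construction $\ext^k(\Gamma)$, and to prove maximality by examining what happens when one tries to adjoin a single vertex to $\diam(\Gamma)$. Suppose by induction that $H$ has been uniquely extended to a harmonic $\hat{H} \in \mathcal{H}^{\ext^{k-1}(\Gamma)}_R$. For each new vertex $v \in \ext^k(\Gamma) \setminus \ext^{k-1}(\Gamma)$ the definition of $\ext$ supplies a parent $w \in \ext^{k-1}(\Gamma)$ whose three other neighbors $w_1,w_2,w_3$ lie in $\ext^{k-1}(\Gamma)$ and whose fourth neighbor is $v$. Adjoining $v$ promotes $w$ to an interior vertex of $\ext^k(\Gamma)$, so the harmonic equation at $w$ forces
\begin{equation*}
\hat{H}(v) \;=\; 4\hat{H}(w) - \hat{H}(w_1) - \hat{H}(w_2) - \hat{H}(w_3).
\end{equation*}
This prescription is $R$-linear and therefore preserves $R$-valuedness; uniqueness follows immediately because any harmonic extension has to satisfy this same equation at $w$. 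Iteration terminates (since $\diam(\Gamma)$ is finite by the previous corollary) and produces a candidate $\hat{H} \in \mathcal{H}^{\diam(\Gamma)}_R$.

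The main obstacle is to check that this candidate is actually harmonic on all of $\ext^k(\Gamma)$ and that the prescription is well-defined when a new vertex admits several possible parents. Three configurations have to be ruled out: (i) two candidate parents $w^a, w^b$ for the same $v$ producing conflicting values; (ii) a vertex $w \in \ext^{k-1}(\Gamma)$ with two of its missing neighbors added in the same step through other parents, so that the harmonic equation at $w$ in $\ext^k(\Gamma)$ is not manifestly enforced by any single defining formula; (iii) a newly added $v$ that happens to be interior in $\ext^k(\Gamma)$, imposing an additional harmonic equation at $v$. In each case the problematic local pattern would place two vertices of $\ext^{k-1}(\Gamma)$ in diametrically opposite positions relative to some allegedly missing vertex $u$, whose integer midpoint is then $u$ itself; by convexity of the open set $P$ with $\ext^{k-1}(\Gamma) = P \cap \mathbb{Z}^2$, this midpoint belongs to $P$, so $u \in \ext^{k-1}(\Gamma)$, contradicting the assumption that $u$ was missing. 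The technical work consists in performing this midpoint case analysis uniformly over the few distinct local geometries (edge- vs.\ corner-type positions of $w$, and analogously for $v$), and is the main obstacle in the proof.

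For maximality, I would examine the minimal extension $\tilde\Gamma = \diam(\Gamma)\cup\{v\}$ for some $v \in \mathbb{Z}^2 \setminus \diam(\Gamma)$ adjacent to $\diam(\Gamma)$. Because $\diam(\Gamma)$ is a fixed point of $\ext$, no vertex of $\diam(\Gamma)$ has three neighbors in $\diam(\Gamma)$ together with $v$ as its fourth, so adjoining $v$ promotes no vertex of $\diam(\Gamma)$ to an interior vertex of $\tilde\Gamma$. Moreover, $v$ itself cannot be interior in $\tilde\Gamma$: all four of its lattice neighbors would then lie in $\diam(\Gamma)$, and the midpoint of a pair of opposite neighbors would force $v \in \diam(\Gamma)$, a contradiction. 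Consequently no harmonic equation in $\tilde\Gamma$ constrains $\hat{H}(v)$; it is therefore a free $R$-parameter, and uniqueness already fails at this single-vertex extension. For an arbitrary $\Gamma' \supsetneq \diam(\Gamma)$ one picks a vertex in $\Gamma'$ adjacent to $\diam(\Gamma)$ and propagates this ambiguity: the one-parameter family of harmonic functions on $\tilde\Gamma$ exhibited above produces a non-trivial element in the kernel of the restriction map $\mathcal{H}^{\Gamma'}_R \to \mathcal{H}^\Gamma_R$, so some $H \in \mathcal{H}^\Gamma_R$ does not admit a unique extension to $\Gamma'$, confirming that $\diam(\Gamma)$ is maximal.
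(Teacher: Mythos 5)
The paper states this corollary without any proof, so there is nothing to compare your route against; I can only assess your argument on its own terms. Your overall scheme is the right one: the forced value $\hat{H}(v)=4\hat{H}(w)-\hat{H}(w_1)-\hat{H}(w_2)-\hat{H}(w_3)$ at a parent $w$ that becomes interior, induction over the $\ext^k$ iteration, and your list (i)--(iii) of consistency issues is exactly the correct list of obstacles. However, the part you yourself flag as ``the main obstacle'' is not carried out, and the tool you advertise for it is not strong enough as stated. The pure integer-midpoint argument does dispose of some configurations (e.g.\ two perpendicular parents of the same $v$), but not all. Concretely, in configuration (ii) take $w\in\ext^{k-1}(\Gamma)$ with $w+e_1,w+e_2$ present, $w-e_1,w-e_2$ absent, and parents $u_1=w-e_1+e_2$, $u_2=w+e_1-e_2$; the vertices this forces into $\ext^{k-1}(\Gamma)$ (namely $w-2e_1+e_2$, $w-e_1+2e_2$, $w+2e_1-e_2$, $w+e_1-2e_2$, and the two known neighbors of $w$) contain no pair whose midpoint is $w-e_1$ or $w-e_2$. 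The contradiction does exist, but only via the full convexity of $P$: the segment from $w-2e_1+e_2$ to $w+e_1-2e_2$ passes through $w-e_1$ and $w-e_2$ at parameters $1/3$ and $2/3$. So the case analysis must be phrased in terms of arbitrary lattice points on segments between vertices of $\ext^{k-1}(\Gamma)$, not midpoints, and it still has to be done; note also that you are assuming each intermediate $\ext^k(\Gamma)$ is of the form $P\cap\mathbb{Z}^2$ with $P$ convex, which is itself only asserted (for the limit) in the preceding unproved corollary.

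The maximality argument has a genuine logical gap at its last step. From the fact that $\hat{H}(v)$ is a free parameter on $\tilde\Gamma=\diam(\Gamma)\cup\{v\}$ you conclude a non-trivial kernel element of the restriction $\mathcal{H}^{\Gamma'}_R\rightarrow\mathcal{H}^{\Gamma}_R$ for any $\Gamma'\supsetneq\diam(\Gamma)$. But a harmonic function on $\tilde\Gamma$ need not extend to $\Gamma'$: the harmonicity constraints coming from $\Gamma'\setminus\tilde\Gamma$ could, a priori, select exactly one value of the free parameter, in which case every $H\in\mathcal{H}^\Gamma_R$ would again extend uniquely to $\Gamma'$ and maximality would fail for your argument to detect. (Equivalently: non-uniqueness on $\tilde\Gamma$ implies only that on $\Gamma'$ \emph{either} existence \emph{or} uniqueness could fail, and in the latter scenario nothing is proved.) What is actually needed is a dimension count: $\dim_{\mathbb{R}}\mathcal{H}^{\Gamma'}_{\mathbb{R}}=|\partial\Gamma'|$, so unique extendability of every $H$ forces $|\partial\Gamma'|=|\partial\Gamma|=|\partial\diam(\Gamma)|$, and one must show $|\partial\Gamma'|>|\partial\diam(\Gamma)|$ for every admissible $\Gamma'\supsetneq\diam(\Gamma)$ (your observation that adjoining a single vertex to the fixed point of $\ext$ creates no new interior vertex is the first step of exactly this count, and for $R=\mathbb{Z},\mathbb{Q}$ the resulting real kernel element must additionally be scaled to be $R$-valued). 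Both gaps are fixable, but as written the proof is incomplete precisely where the content lies.
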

It directly follows that, if $\mathcal{B}^{\diam(\Gamma)}_R$ denotes the result of extending all basis functions in $\mathcal{B}^\Gamma_R$ to $\diam(\Gamma)$, then $\mathcal{B}^{\diam(\Gamma)}_R$ is a basis for $\mathcal{H}^{\diam(\Gamma)}_R$. 
At least for the bases constructed below, the reverse is also true.
\begin{definition}[Line-segment]
A vertex $v\in\Gamma$ is a line-segment in $\Gamma\subset\mathbb{Z}^2$ if it has exactly two neighbors $w_1$ and $w_2$ in $\Gamma$, and if $v$, $w_1$ and $w_2$ lie on a line. We denote by $\linecenter(\Gamma)\subset\Gamma$ the set of all line-segments in $\Gamma$.
\end{definition}

With these definitions, our algorithm for the construction of a basis for the module $\mathcal{H}^\Gamma_\mathbb{Z}$ of integer-valued harmonic functions is given in Figure~\ref{fig:basisAlgo}, and exemplified in Figure~\ref{fig:properBasis}C.
\begin{figure}
\begin{algorithm}[H]
\DontPrintSemicolon
\KwIn{A finite convex domain $\Gamma\subset\mathbb{Z}^2$.}
\KwOut{A basis $\mathcal{B}^\Gamma_\mathbb{Z}$ for $\mathcal{H}^\Gamma_\mathbb{Z}$.}
\Begin{
	Set $\Gamma_0$:=$\{\}$, $\mathcal{B}_0$:=$\{\}$, $s$:=$0$\;
	\While{$\Gamma_s\neq\Gamma$}
	 {
		$s$:=$s+1$\;
		Choose $v_s\in\Gamma\setminus \Gamma_{s-1}$ such that $\Gamma_{s-1}\cup\{v_s\}$ is convex and $\linecenter(\Gamma_{s-1}\cup\{v_s\})\subseteq\linecenter(\Gamma)$\;
		Determine $i$ and $j$ such that $d^+_i\cap d^-_j=\{v_s\}$\;
		Choose $B_s\in\{\Hag{i}, \Hal{i}, \Hbg{j}, \Hbl{j}\}$ such that $\left.B_s\right|_{\Gamma_{s-1}}=0$\;
		Set $\mathcal{B}_s$ := $\mathcal{B}_{s-1}\cup\{B_s|_\Gamma\}$\;
		Set $\Gamma_s=\diam(\Gamma_{s-1}\cup\{v_s\})\cap\Gamma$\;
	 }
	\KwRet{$\mathcal{B}_s$}
}
\end{algorithm} 
\caption{An algorithm for the construction of a basis for the module $\mathcal{H}^\Gamma_\mathbb{Z}$ of integer-valued harmonic functions on a finite convex domain $\Gamma\subset\mathbb{Z}^2$.}
\label{fig:basisAlgo}
\end{figure}

\begin{lemma}\label{algoBasisLemma}
For every finite convex domain $\Gamma\subset\mathbb{Z}^2$, the algorithm terminates and returns a basis $\mathcal{B}^\Gamma_\mathbb{Z}$ for the module $\mathcal{H}^\Gamma_\mathbb{Z}$.
\end{lemma}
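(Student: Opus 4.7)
The plan is to verify, in order: (1) that each iteration of the algorithm is well-defined, i.e., that a valid $v_s$ and $B_s$ always exist, (2) that the algorithm terminates, and (3) that the returned set is in fact a $\mathbb{Z}$-basis for $\mathcal{H}^\Gamma_\mathbb{Z}$. Termination is the easiest step: since $v_s \in \Gamma \setminus \Gamma_{s-1}$ in each iteration, one has $\Gamma_{s-1} \subsetneq \Gamma_s \subseteq \Gamma$, and $\Gamma$ is finite, so the \textbf{while} loop must halt after at most $|\Gamma|$ steps.

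For the existence of $v_s$, I would argue that whenever $\Gamma_{s-1} \subsetneq \Gamma$, there is a vertex $v \in \Gamma \setminus \Gamma_{s-1}$ adjacent to $\Gamma_{s-1}$ whose addition preserves convexity and does not introduce a new line-segment outside $\linecenter(\Gamma)$; geometrically, a minimal extension of $\Gamma_{s-1}$ towards a corner vertex of $\Gamma$ always provides such a $v$. For the existence of $B_s$, I would use that $v_s$ is an extreme point of $\Gamma_{s-1}\cup\{v_s\}$, so one of the two diagonals $d_i^+$ or $d_j^-$ through $v_s$ supports $\Gamma_{s-1}\cup\{v_s\}$ in the sense that $\Gamma_{s-1}$ lies strictly on one side of it. The corresponding choice among $\Hag{i}$, $\Hal{i}$, $\Hbg{j}$, $\Hbl{j}$ then vanishes on $\Gamma_{s-1}$ and takes value $\pm 1$ at $v_s$. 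I would verify this case-by-case using the four-quadrant decomposition around $v_s$, noting that the $\pm 1$ value on the defining diagonal is guaranteed by the construction of $\Hag{\cdot}$ from \cite{Buhovsky2017}.

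For the basis property, linear independence follows almost immediately: ordering the selected vertices $v_1,\ldots,v_N$ by iteration, the construction guarantees $B_s(v_t)=0$ for $t<s$ and $B_s(v_s)=\pm 1$, so the evaluation matrix $(B_s(v_t))_{s,t\leq N}$ is lower triangular with unit diagonal, hence invertible over $\mathbb{Z}$. For spanning, I would show by induction on $s$ that every $H \in \mathcal{H}^\Gamma_\mathbb{Z}$ can be reduced, by subtracting an integer multiple of $B_s$, to a harmonic function vanishing on $\Gamma_s$, which terminates at $\Gamma_N=\Gamma$ and writes $H$ as an integer combination of the $B_s$. The key auxiliary fact, which I would establish via Corollary~\ref{ext2diam} applied inductively together with the update rule $\Gamma_s=\diam(\Gamma_{s-1}\cup\{v_s\})\cap\Gamma$, is that any harmonic function on $\diam(\Gamma_{s-1}\cup\{v_s\})$ that vanishes on $\Gamma_{s-1}$ is uniquely determined by its value at $v_s$; this is precisely what makes a single back-substitution at each step sufficient to annihilate the value at the next $v_s$.

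The main obstacle I anticipate is the combinatorial/geometric verification that a suitable $v_s$ exists at every stage, i.e., that the convexity requirement and the line-segment condition $\linecenter(\Gamma_{s-1}\cup\{v_s\})\subseteq\linecenter(\Gamma)$ can always be satisfied simultaneously. This requires a careful analysis distinguishing the cases where the local boundary of $\Gamma_{s-1}$ meets the boundary of $\Gamma$ at a reflex corner versus along a flat segment, together with the observation that the diamond hull step absorbs all vertices forced by the Dirichlet problem so that only genuinely new degrees of freedom remain to be chosen in subsequent iterations. Once this is handled, the count $N=|\partial\Gamma|$ follows by comparing the rank of $\mathcal{H}^\Gamma_\mathbb{Z}$, as used in Lemma~\ref{lemma:order}, with the number of iterations, and confirms that the algorithm produces a full basis rather than merely a linearly independent set.
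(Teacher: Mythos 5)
Your proposal is correct and follows essentially the same route as the paper: a supporting-diagonal argument for the existence of $B_s$ (which the paper makes concrete via the neighbor $v_N$ of $v_s$ and the diagonal through $v_s$ and the missing vertex $v_R$, using the convexity, line-segment and diamond-hull conditions exactly as you anticipate), the unit-lower-triangular structure of the evaluation matrix $(B_s(v_t))$ for linear independence, and the count $|\mathcal{B}_s|=|\partial\Gamma_s|=s$ to conclude. The only cosmetic difference is in the spanning step, where you run an explicit $\mathbb{Z}$-valued back-substitution via Corollary~\ref{ext2diam}, whereas the paper counts dimensions over $\mathbb{R}$ and then extracts integrality of the coefficients from the same triangularity; these are two readings of one observation.
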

\begin{proof}
It is easy to see that, in every step $s$, there always exist at least one vertex $v_s$ such that $\Gamma_{s-1}\cup\{v_s\}$ is convex and $\linecenter(\Gamma_{s-1}\cup\{v_s\})\subseteq\linecenter(\Gamma)$. To prove termination of the algorithm, we thus only have to show that, independent of the choice of $v_s$, at least one of the harmonic functions $\Hag{i}, \Hal{i}, \Hbg{j}, \Hbl{j}$ is zero on $\Gamma_{s-1}$. For $s=1$, this is trivially true. For $s>1$, $v_s$ has at least one neighbor in $\Gamma_{s-1}$. Denote this vertex by $v_N$, and, w.l.o.g., assume that it is to the bottom of $v_s$. In $\Gamma_{s-1}$, $v_N$ must have at most two neighbors, since otherwise $v_s\in\diam(\Gamma_{s-1})$. Furthermore, $v_N$ must not have both a neighbor to the right and to the left in $\Gamma_{s-1}$, since this would constitute a line segment in $\Gamma_{s-1}$ which is not a line-segment in $\Gamma$. W.l.o.g., assume that the vertex to the right of $v_N$ is not in $\Gamma_{s-1}$, and denote this vertex by $v_R$. Furthermore, let $d^+_i$ be the diagonal going through $v_s$ and $v_R$. No vertex on this diagonal (and thus also not to the right of this diagonal) can be an element of $\Gamma_{s-1}$, since otherwise either $\Gamma_{s-1}\cup\{v_s\}$ would not be convex, $\linecenter(\Gamma_{s-1})\not\subseteq\linecenter(\Gamma)$, or $v_s\in\diam(\Gamma_{s-1})$. 
Thus, $B_s=\Hag{i}$ is a valid choice at step $s$. 
To show that $\mathcal{B}^\Gamma_\mathbb{Z}$ is a basis for $\mathcal{H}^\Gamma_\mathbb{Z}$, note that $B_1$ is the only harmonic function in $\mathcal{B}^\Gamma_\mathbb{Z}$ which is non-zero at $v_1$. By definition, $B_1$ takes the value $\pm 1$ at $v_1$, and thus only linear combination of $\mathcal{B}^\Gamma_\mathbb{Z}$ can be integer-valued for which the coefficient corresponding to $B_1$ is integer-valued. Assume that, in step $s$, this is true for all harmonic functions in $\mathcal{B}_{s-1}$. Then, since $B_s$ takes the value $\pm 1$ at $v_s$, this is also true for $B_s$. 
Thus, the functions in $\mathcal{B}^\Gamma_\mathbb{Z}$ are linearly independent. Since $|\mathcal{B}_s|=|\partial\Gamma_s|=s$, $|\mathcal{B}^\Gamma_\mathbb{Z}|=|\partial\Gamma|$. Thus, $\mathcal{B}^\Gamma_\mathbb{Z}$ is a basis for $\mathcal{H}^\Gamma_\mathbb{R}$. We conclude our proof by noting that $\mathcal{H}^\Gamma_\mathbb{Z}\subset\mathcal{H}^\Gamma_\mathbb{R}$.
\end{proof}

\begin{corollary}\label{corollary:properSquareBasis}
Let $\Gamma_N\subset\mathbb{Z}^2$ be an $N\times N$ square domain. Assume that $d^+_0$ and $d^-_0$ correspond to the main diagonals of the domain. Then, (i) for $N=1$, $\{\Hag{0}\}$ is a basis for $\mathcal{H}^\Gamma_\mathbb{Z}$; (ii) for $N\in 2\mathbb{N}$, $\{\Hag{i},\allowbreak\Hal{-i},\allowbreak\Hbg{i},\allowbreak\Hbl{-i}\}_{i=1,\ldots,N-1}$ is a basis; and (iii) for $N\in 2\mathbb{N}+1$, $\{\Hag{0},\allowbreak\Hal{-1},\allowbreak\Hbg{1},\allowbreak\Hbl{-1}\} \cup\allowbreak\{\Hag{i},\allowbreak\Hal{-i},\allowbreak\Hbg{i},\allowbreak\Hbl{-i}\}_{i=2,\ldots,N-1}$ is a basis for $\mathcal{H}^\Gamma_\mathbb{Z}$.
\end{corollary}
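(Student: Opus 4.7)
By Lemma \ref{algoBasisLemma}, the algorithm in Figure \ref{fig:basisAlgo} always returns a basis for $\mathcal{H}^\Gamma_\mathbb{Z}$. The plan is, for each case, to exhibit an execution of the algorithm whose output matches the stated basis. The cardinalities already agree: $|\partial\Gamma_N| = 4(N-1)$ for $N \geq 2$ and $1$ for $N = 1$, equal to the size of each claimed basis.

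Case (i) is immediate: the single vertex lies on $d^+_0 \cap d^-_0$ and the algorithm outputs $\Hag{0}$. For cases (ii) and (iii), the plan is a layered spiral that grows $\Gamma_s$ outward in concentric layers indexed by $|i|$, where layer $i$ consists of the lattice vertices on the four diagonals $d^+_{\pm i}, d^-_{\pm i}$. For $N$ odd (case iii), the execution begins at the central lattice vertex $v_1 = (0,0) \in d^+_0 \cap d^-_0$ with $B_1 = \Hag{0}$, then adds the nearest-neighbor vertices $v_2 = (1,0), v_3 = (0,1), v_4 = (0,-1)$ in sequence. At each of these steps, the admissibility condition $B_s|_{\Gamma_{s-1}} = 0$ restricts the four candidate harmonic functions to a unique valid choice, producing (with an appropriate coordination of the free values) $B_2 = \Hbg{1}, B_3 = \Hbl{-1}, B_4 = \Hal{-1}$. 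The fourth central-layer neighbor $(-1,0)$ cannot be added next as an explicit $v_s$, because none of its four candidates $\Hag{-1}, \Hal{-1}, \Hbg{-1}, \Hbl{-1}$ vanishes on the accumulated $\Gamma_4$; it is instead absorbed into some later $\Gamma_s$ by the diamond-hull operation, explaining the absence of $\Hag{1}$ from the stated basis. Thereafter, the spiral proceeds outward, and for each layer $i \in \{2, \ldots, N-1\}$ four vertex additions force the four basis functions $\Hag{i}, \Hal{-i}, \Hbg{i}, \Hbl{-i}$. Case (ii) ($N$ even) is handled analogously, but starts from one of the four innermost vertices on $d^+_{\pm 1} \cap d^-_{\pm 1}$ with no central-layer asymmetry.

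The main obstacle is the step-by-step verification of the algorithm's three admissibility conditions---convexity of $\Gamma_{s-1} \cup \{v_s\}$, $\linecenter(\Gamma_{s-1} \cup \{v_s\}) \subseteq \linecenter(\Gamma_N)$, and the vanishing of $B_s$ on $\Gamma_{s-1}$---together with the coordination of the free values of the harmonic functions $\Hag{i}, \Hal{i}, \Hbg{j}, \Hbl{j}$ on their non-defining diagonals so that the stated collection is produced. The layered spiral pattern ensures the convexity and line-segment conditions (for example, one cannot add both $(1,0)$ and $(-1,0)$ before first adding one of $(0, \pm 1)$, or else $(0, 0)$ would become a spurious line-segment of $\Gamma_{s-1} \cup \{v_s\}$ that is not a line-segment of $\Gamma_N$). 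The delicate part is the vanishing condition, which requires inductive tracking of which diagonals $d^+_a, d^-_b$ have already been ``activated'' by vertices of $\Gamma_{s-1}$, so that at each step exactly one of the four candidates is admissible and equal---up to a consistent assignment of free values---to the intended basis function. The asymmetric initial set $\{\Hag{0}, \Hal{-1}, \Hbg{1}, \Hbl{-1}\}$ in the odd case, rather than four functions fully symmetric around the center, is the direct consequence of this forcing at steps $1$--$4$, and verifying that the spiral remains compatible with the stated labels through all subsequent layers is the main combinatorial content of the proof.
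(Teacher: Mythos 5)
Your approach is exactly the paper's: its proof of this corollary consists of pointing to Figure~\ref{fig:properBasis}C, which is precisely an explicit execution of the algorithm of Lemma~\ref{algoBasisLemma} on a square domain, growing outward from the center in layers and reading off the stated functions as the outputs $B_s$. One small slip in your even case: the four innermost vertices of an $N\times N$ domain with $N$ even lie on $d^{\pm}_0\cap d^{\mp}_{\pm 1}$ (the intersections $d^+_{\pm 1}\cap d^-_{\pm 1}$ contain no lattice points when $d^+_0$ and $d^-_0$ are the main diagonals of an even square), but the corresponding four steps still force $\Hag{1},\Hal{-1},\Hbg{1},\Hbl{-1}$, so the conclusion is unaffected.
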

\begin{proof}
See construction in Figure~\ref{fig:properBasis}C.
\end{proof}

\section{A harmonic function related to primes}
\begin{figure}[htb!]
	\centering
	 	\includegraphics[width=0.75\textwidth]{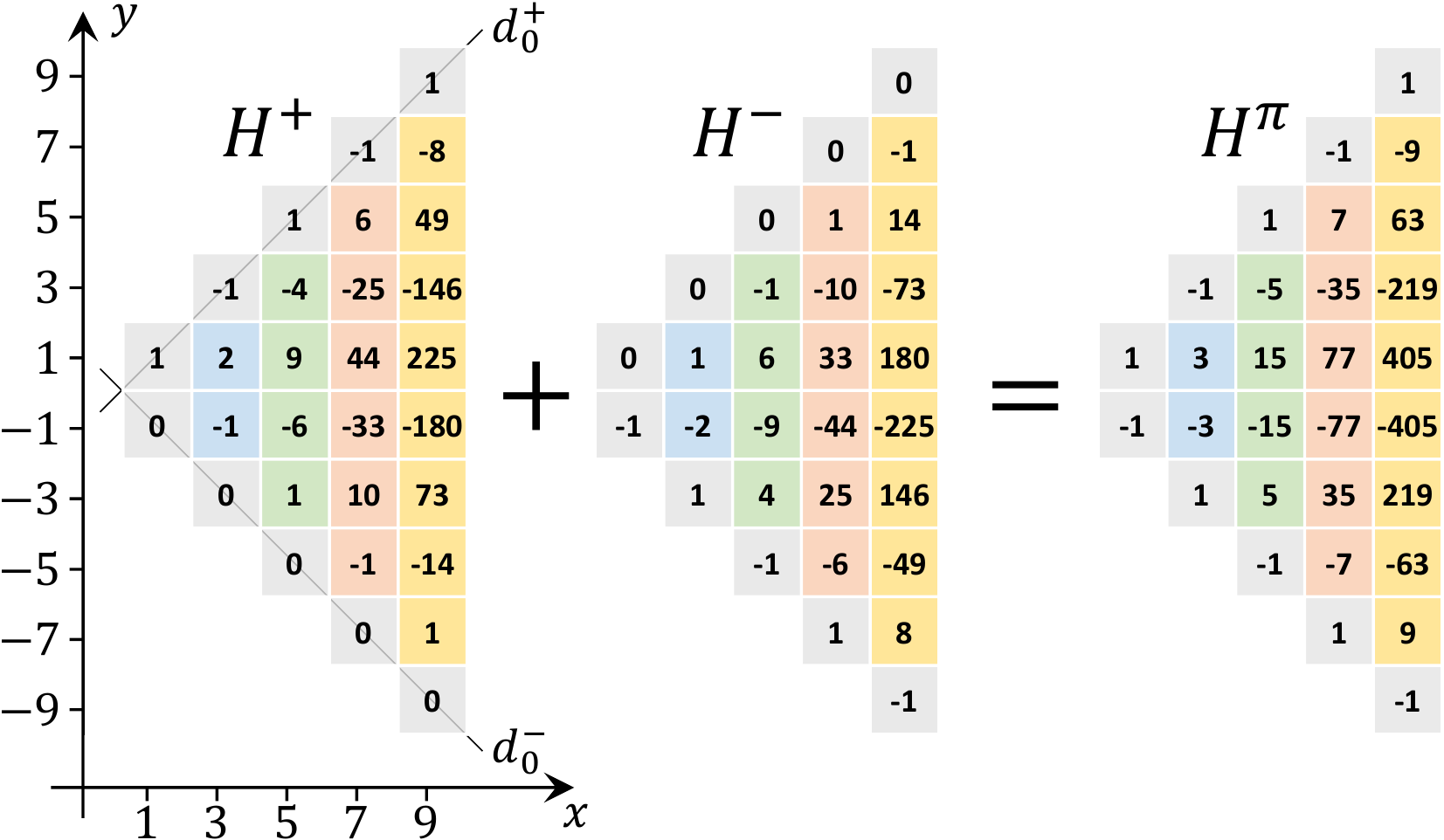}
	\caption{In each quadrant of the domain, $H^{\pi}=H^++H^-$ is the sum of two harmonic functions $H^+$ and $H^-$. Our proof of Lemma~\ref{lemma:divPrime} is based on showing that $(x-y)H^+(x,y)=-(x+y)H^+(x,-y)$.
	}
	\label{fig:primeConstruction}
\end{figure}
To prove Lemma~\ref{lemma:divPrime}, we first observe that, for $N=1$, $G^\Gamma\cong\mathbb{Z}/4\mathbb{Z}\supseteq\mathbb{Z}/2\mathbb{Z}$, in agreement with the lemma. Since $N+1$ must be prime, we thus assume that $N\in 2\mathbb{N}$ in the following.

Define the diagonals $d^+_0$ and $d^-_0$ (see previous section) such that they do not intersect at a vertex, and choose the harmonic basis functions $\Hag{2i-1}$, $\Hal{-2i+1}$,$\Hbg{2i-1}$, and $\Hbl{-2i+1}$, $i\in\mathbb{N}$, as described in Figure~\ref{fig:properBasis}B. The harmonic function depicted in Figure~\ref{fig:primeHarmonic}B is then given by
\begin{align*}
H^\pi=\sum_{i=0}^\infty(-1)^{i}(\Hag{2i+1}+\Hal{-(2i+1)}-\Hbg{2i+1}-\Hbl{-(2i+1)}).
\end{align*}

By definition, $H^\pi$ is symmetric with respect to $d^+_0$ and $d^-_0$. Thus, we w.l.o.g. only consider the quadrant of the domain below $d^+_0$ and above $d^-_0$, and label the vertices in this quadrant by coordinates $(x,y)$, with $x=1,3,5,\ldots$ and $y=\pm 1,\pm 3,\ldots\pm x$, as shown in Figure~\ref{fig:primeConstruction}. In this quadrant, $H^\pi=H^++H^-$ with $H^+=\sum_{i=0}^\infty(-1)^{i}\Hag{2i+1}$, and $H^-=\sum_{i=0}^\infty(-1)^{i+1}\Hbg{2i+1}$ (Figure~\ref{fig:primeConstruction}). We claim that, in this quadrant, $(x-y)H^+(x,y)=-(x+y)H^+(x,-y)$. Since $H^+(x,y)=-H^-(x,-y)$, this implies that 
\begin{align*}
H^\pi=&H^+(x,y)+H^-(x,y)
=H^+(x,y)-H^+(x,-y)\\
=&H^+(x,y)+\frac{x-y}{x+y}H^+(x,y)
=\frac{2x}{x+y}H^+(x,y).
\end{align*}
For $|y|<x$, since $y\neq 0$ and $H^+(x,y)$ is integer-valued, $H^\pi(x,y)$ is thus divisible by $x$ given that $x$ is prime. Thus, if our claim holds, Lemma~\ref{lemma:divPrime} directly follows from the discussion in the Introduction when setting $x=N+1$. 

For small enough values of $x$, it is possible to directly check our claim for all $|y|\leq x$ (Figure~\ref{fig:primeHarmonic}). 
Furthermore, it is easy to validate (by induction, in this order) that $H^+(x,x)=\pm 1$, $H^+(x,-x)=0$, $H^+(x,x-2)=\mp (x-1)$, $H^+(x,-x+2)=\pm 1$, $H^+(x,x-4)=\pm(x-2)^2$, and $H^+(x,-x+4)=\mp (2x-4)$. Thus, our claim also holds close to the diagonals $d_0^+$ and $d_0^-$.

Now, assume that our claim holds for all $\hat{x}\leq x$. Then,
{\small
\begin{align*}
H^+(x+2,y)=&4H^+(x,y)-H^+(x,y-2)-H^+(x,y+2)-H^+(x-2,y)\\
=&-4\frac{x+y}{x-y}H^+(x,-y)+\frac{x+y-2}{x-y+2}H^+(x,-y+2)\\
&+\frac{x+y+2}{x-y-2}H^+(x,-y-2)+\frac{x+y-2}{x-y-2}H^+(x-2,-y)\\
=&-\frac{x+y+2}{x-y+2}H^+(x+2,-y)+\frac{4}{x-y+2}\epsilon(x,-y),
\end{align*}
}
with
{\small
\begin{align*}
\epsilon(x,y) =& -\frac{4y}{x-y}H^+(x,-y)-H^+(x,-y+2)\\
&+\frac{x+y+2}{x-y-2}H^+(x,-y-2)+\frac{2y}{x-y-2}H^+(x-2,-y)\\
=& \frac{4y}{x+y}H^+(x,y)+\frac{x-y+2}{x+y-2}H^+(x,y-2)\\
&-H^+(x,y+2)-\frac{2y}{x+y-2}H^+(x-2,y).
\end{align*}
}
The following calculations, which show that $\epsilon(x,y)=0$, are ``a bit tedious'' to check by hand, and we thus recommend using a computer algebra system. We first utilize that $H^+$ is harmonic to replace $H^+(x,y)$ by $4H^+(x-2,y)-H^+(x-2,y+2)-H^+(x-2,y-2)-H^+(x-4,y)$, and similarly for $H^+(x,y-2)$ and $H^+(x,y-2)$. We then utilize that, by the inductive assumption, $\epsilon(\hat{x},y)=0$ for all $\hat{x}<x$ to replace $H^+(x-2,y+4)$ by $\frac{4y+8}{x+y}H^+(x-2,y+2)+\frac{x-y-2}{x+y-2}H^+(x-2,y)-\frac{2y+4}{x+y-2}H^+(x-4,y+2)$, and similarly for $H^+(x-2,y+2)$ and $H^+(x-2,y-4)$, which also eliminates $H^+(x-2,y-2)$. The only remaining term in column $x-2$ is then in $H^+(x-2,y)$, which we replace again by $4H^+(x-4,y)-H^+(x-4,y+2)-H^+(x-4,y-2)-H^+(x-6,y)$. We then get that $\epsilon(x,y)=-\frac{x+y-6}{x+y-2}\epsilon(x-4,y)=0$, which concludes the proof.

\section{Diamond-shaped harmonic functions}
From the discussion in the Introduction, it becomes clear that every harmonic function of the type depicted in Figure~\ref{fig:primeHarmonic}C directly corresponds to a cyclic subgroup $\mathbb{Z}\setminus 4\mathbb{Z}\subseteq G_\Gamma$ of the sandpile group $G_\Gamma$ with order four.
To prove Lemma~\ref{lemma:div4}, we thus only have to show that there exist $N$ such harmonic functions which are linearly independent. Since, for $N=1$, the lemma is trivially satisfied ($S_1=\mathbb{Z}/4\mathbb{Z}\cong G^\Gamma$), we assume $N>1$ in the following.

For $N\in 2\mathbb{N}$, define $d^+_0$ and $d^-_0$ such that they correspond to the main diagonals of the domain. Then, by Corollary~\ref{corollary:properSquareBasis}, $\mathcal{B}^\Gamma_\mathbb{Z}=\{\Hag{i}|_\Gamma,\allowbreak\Hal{-i}|_\Gamma,\allowbreak\Hbg{i}|_\Gamma,\allowbreak\Hbl{-i}|_\Gamma\}_{i=1,\ldots,N-1}$ is a basis for the $\mathcal{H}_\mathbb{Z}^\Gamma$ on $\Gamma$. 
Recall, that $\Hag{i}$ is defined on the whole of $\mathbb{Z}^2$, takes values in $\{-1,+1\}$ on its defining diagonal $d^+_i$, and is zero below it (and similar for $\Hal{-i}$, $\Hbg{i}$, and $\Hbl{-i}$). Also recall that on every diagonal $d_j^+$, $j>i$, we can still choose the value of one vertex, which then determines the values $\Hag{i}$ of all other vertices on the same diagonal \cite{Buhovsky2017}. If we always choose these ``free values'' to be zero, $\Hag{i}$ becomes divisible by four everywhere except on its defining diagonal $d_i^+$ (Figure~\ref{fig:primeHarmonic}C). 

On the boundary $\partial\Gamma^{op}$ of the complement $\Gamma^{op}=\mathbb{Z}^2\setminus\Gamma$ of the domain, this implies that $\Hag{i}$ is divisible by four, except for two vertices for which $\Hag{i}$ takes a value of $\pm 1$ (Figure~\ref{fig:primeHarmonic}A). 
Furthermore, for each vertex $v\in\partial\Gamma^{op}$ on this boundary, there exist exactly two basis functions $B_i, B_j\in\mathcal{B}^{\Gamma}_\mathbb{Z}$ with $B_i(v), B_j(v)\in\{-1, +1\}$. Thus, for each $i=1,\ldots,N$, we can define the harmonic function $H^\diamond_i=\Hag{i}\pm\Hbg{N-i+1}\pm\Hal{-i}\pm\Hbl{-N+i-1}$, where the signs are chosen such that the values of the basis functions on the boundary which are $\pm 1$ cancel each other out (Figure~\ref{fig:primeHarmonic}C). It then directly follows that $\Delta_\Gamma (H^\diamond_i|_\Gamma)$ is divisible by four.

We can always replace one basis function in a basis for $\mathcal{H}_\mathbb{Z}^\Gamma$ by the sum of it and an integer-multiple of another function in the same basis. That is, if $\{B_1,\ldots,B_i,\ldots B_{|\partial\Gamma|}\}$ is a basis for $\mathcal{H}_\mathbb{Z}^\Gamma$, $\{B_1,\ldots,B_i+z B_j,\ldots B_{|\partial\Gamma|}\}$ is so, too, for every $z\in\mathbb{Z}$ and $i\neq j$. Trivially, since $\Hag{N}$, $\Hal{-N}$,$\Hbg{N}$ and $\Hbl{-N}$ evaluate to zero in $\Gamma$, we can also add their restriction to $\Gamma$ to any basis functions.
Together, this means that $\{H^\diamond_1|_\Gamma,\allowbreak\Hal{-1}|_\Gamma,\allowbreak H^\diamond_N|_\Gamma,\allowbreak\Hbl{-1}|_\Gamma\}\cup\{H^\diamond_i|_\Gamma,\allowbreak\Hal{-i}|_\Gamma,\allowbreak\Hbg{i}|_\Gamma,\allowbreak\Hbl{-i}|_\Gamma\}_{i=2,\ldots,N-1}$ is also a basis for $\mathcal{H}_\mathbb{Z}^\Gamma$, from which Lemma~\ref{lemma:div4} directly follows.

For $N\in 2\mathbb{N}+1$, by Corollary~\ref{corollary:properSquareBasis}, $\{\Hag{0}|_\Gamma,\allowbreak\Hal{-1}|_\Gamma,\allowbreak\Hbg{1}|_\Gamma,\allowbreak\Hbl{-1}|_\Gamma\} \cup\allowbreak\{\Hag{i}|_\Gamma,\allowbreak\Hal{-i}|_\Gamma,\allowbreak\Hbg{i}|_\Gamma,\allowbreak\Hbl{-i}|_\Gamma\}_{i=2,\ldots,N-1}$ is a basis for $\mathcal{H}^\Gamma_\mathbb{Z}$. By a similar argument as before, we get that also $\{\Hag{0}|_\Gamma,\allowbreak\Hal{-1}|_\Gamma,\allowbreak H^\diamond_N|_\Gamma,\allowbreak\Hbl{-1}|_\Gamma\} \cup\allowbreak\{H^\diamond_i|_\Gamma,\allowbreak\Hal{-i}|_\Gamma,\allowbreak\Hbg{i}|_\Gamma,\allowbreak\Hbl{-i}|_\Gamma\}_{i=2,\ldots,N-1}$ is a basis. Note that, different to before, this basis only contains $N-1$ diamond shaped basis functions $H^\diamond_i$. However, the Laplacian of $\Hag{0}$ is divisible by four, too, which concludes our proof of Lemma~\ref{lemma:div4}.

\bibliographystyle{aomplain}
\bibliography{Lang2019}
\end{document}